\newtheorem{claim}{Claim}[section]
\newtheorem{theorem}{Theorem}[section]
\newtheorem{definition}[theorem]{Definition}
\newtheorem{construction}[theorem]{Construction}
\newtheorem{lemma}[theorem]{Lemma}
\newtheorem{observation}[theorem]{Observation}
\begin{document}
	
\title{On the Fixed-Length-Burst Levenshtein Ball with Unit Radius}

\author{Yuanxiao Xi, Yubo Sun, and Gennian Ge
    \thanks{This research was supported by the National Key Research and Development Program of China under Grant 2020YFA0712100, the National Natural Science Foundation of China under Grant 12231014, and Beijing Scholars Program.}

    \thanks{Y. Xi ({\tt yuanxiao\_xi@126.com}) is with the School of Mathematical Sciences, Zhejiang University, Hangzhou 310027, Zhejiang, China.}

    \thanks{Y. Sun ({\tt 2200502135@cnu.edu.cn}) and G. Ge ({\tt gnge@zju.edu.cn}) are with the School of Mathematical Sciences, Capital Normal University, Beijing 100048, China.}
}

\date{}
\maketitle
	
\begin{abstract}
Consider a length-$n$ sequence $\bm{x}$ over a $q$-ary alphabet. The \emph{fixed-length Levenshtein ball} $\mathcal{L}_t(\bm{x})$ of radius $t$ encompasses all length-$n$ $q$-ary sequences that can be derived from $\bm{x}$ by performing $t$ deletions followed by $t$ insertions. Analyzing the size and structure of these balls presents significant challenges in combinatorial coding theory.
Recent studies have successfully characterized fixed-length Levenshtein balls in the context of a single deletion and a single insertion. These works have derived explicit formulas for various key metrics, including the exact size of the balls, extremal bounds (minimum and maximum sizes), as well as expected sizes and their concentration properties. However, the general case involving an arbitrary number of $t$ deletions and $t$ insertions $(t>1)$ remains largely uninvestigated.
This work systematically examines fixed-length Levenshtein balls with multiple deletions and insertions, focusing specifically on \emph{fixed-length burst Levenshtein balls}, where deletions occur consecutively, as do insertions. We provide comprehensive solutions for explicit cardinality formulas, extremal bounds (minimum and maximum sizes), expected size, and concentration properties surrounding the expected value.
\end{abstract}

\begin{IEEEkeywords}
    Levenshtein ball, deletion, substitution, burst error
\end{IEEEkeywords}
	
\section{Introduction}

Synchronization errors, particularly deletions and insertions, represent a significant class of channel impairments encountered in various applications, including DNA storage \cite{Organick-18-background}, racetrack memory \cite{Chee-18-IT-RM}, and document synchronization \cite{Cheng-19-ICALP-burst}. The systematic study of synchronization error-correcting codes originated with Levenshtein's foundational work \cite{Levenshtein-66-SPD-1D} in the 1960s.
To characterize the limits of optimal synchronization codes, determining the cardinalities of deletion and insertion balls is crucial.
For insertion balls, Levenshtein's seminal work \cite{Levenshtein-66-SPD-1D} completely determined their size, establishing that this size is independent of the center. In contrast, the size of deletion balls depends on the center, making the exact determination of their size considerably more challenging.
Levenshtein \cite{Levenshtein-66-SPD-1D} provided both lower and upper bounds on the size of deletion balls by analyzing the number of runs in the center.
Calabi and Hartnett \cite{Calabi-69} and Hirschberg and Regnier \cite{Hirschberg-99} offered recursive expressions for the maximum size of deletion balls, demonstrating that the deletion ball centered at a cyclic sequence (whose $i$-th entry is $i$ modulo the alphabet size) can achieve this maximum size. Additionally, Hirschberg and Regnier \cite{Hirschberg-99} improved the lower bound on the size of deletion balls. Subsequently, Mercier et al. \cite{Mercier-08} derived explicit formulas for the size of deletion balls with small radii. Most recently, Liron and Langberg \cite{Liron-15} further tightened both the upper and lower bounds on the sizes of deletion balls through a structural analysis of sequences containing a fixed number of runs.

The concurrent occurrence of deletions and insertions necessitates analysis beyond isolated deletion/insertion ball studies. Sala and Dolecek~\cite{Sala-13-ISIT-ball} pioneered the investigation of \emph{fixed-length Levenshtein ball} $\mathcal{L}_t(\bm{x})$, defined as the set of sequences obtainable from the center $\bm{x}$ through exactly $t$ deletions followed by $t$ insertions.
Characterizing these balls presents significant challenges, as even deletion ball analysis alone remains nontrivial.
The foundational work by Sala and Dolecek~\cite{Sala-13-ISIT-ball} established a general upper bound on ball sizes and derived an exact closed-form expression for radius one configurations.
Bar-Lev et al.~\cite{Bar-Lev-22-IT-ball} subsequently conducted a comprehensive analysis of radius one balls by quantifying their extremal characteristics, specifically the maximum, minimum, and expected cardinalities.
Building upon these deterministic results, Wang and Wang~\cite{Wang-24-DCC-ball} further revealed that these radius one ball sizes concentrate sharply around their expectation with high probability.
Most recently, He and Ye~\cite{He-23-ISIT-ball} extended this probabilistic analysis to radius-two scenarios, establishing analogous concentration properties.

Burst errors manifest across diverse domains, including DNA storage \cite{Organick-18-background}, racetrack memory \cite{Chee-18-IT-RM}, and document synchronization \cite{Cheng-19-ICALP-burst}. Theoretical analyses of burst error balls have been developed in \cite{Lan-25, Schoeny-17-IT-BD, Levenshtein-70-BD, Sun-25-IT-BD, Sun-23-IT-BDR, Wang-23-IT-BD}. Unlike isolated errors, burst errors exhibit spatial continuity, where all errors occur consecutively. While one may intuitively assume minimal distinction between single error correction and single burst error correction, the latter presents substantially greater complexity. This complexity is highlighted by the more than fifty-year gap between Levenshtein's optimal single-deletion correcting codes \cite{Levenshtein-66-SPD-1D} and the first optimal burst-deletion correcting codes \cite{Sun-25-IT-BD}.

In this paper, we investigate fixed-length Levenshtein balls under burst operations, termed \emph{fixed-length $b$-burst Levenshtein balls} $\mathcal{L}_t^b(\bm{x})$, defined as the set of sequences obtainable from the center $\bm{x}$ through $t$ bursts of deletions followed by $t$ bursts of insertions, each burst having a fixed length $b$. We provide a comprehensive study of fixed-length $b$-burst Levenshtein balls with radius one. Our contributions are threefold:
\begin{itemize}
  \item Establishing an exact closed-form expression for the size of fixed-length $b$-burst Levenshtein balls;
  \item Quantifying the extremal characteristics of fixed-length $b$-burst Levenshtein balls, specifically the maximum, minimum, and expected cardinalities;
  \item Revealing that these ball sizes concentrate sharply around their expectation with high probability.
\end{itemize}

The remainder of this paper is organized as follows. Section~\ref{sec:pre} introduces the relevant notations, definitions, and important tools used throughout the paper. Section~\ref{sec:main} summarizes our main contributions and provides a brief proof of the minimum size of fixed-length $b$-burst Levenshtein balls with radius one. Sections~\ref{sec:explicit}, \ref{sec:maximum}, \ref{sec:average}, and \ref{sec:concentration} present the retailed proofs regarding the explicit size, maximum size, average size, and concentration property of fixed-length $b$-burst Levenshtein balls with radius one, respectively. 

\section{Preliminaries}\label{sec:pre}

\subsection{Notations}

For two integers $i$ and $j$, let $[i, j]$ denote the integer interval $\{i, i+1, \ldots, j\}$ when $i \leq j$, and the empty set otherwise. Let $\Sigma_q$ denote the alphabet set $[0, q-1]$, and let $\Sigma_q^n$ denote the set of all sequences of length $n$ over $\Sigma_q$. For $\boldsymbol{x} \in \Sigma_q^n$, we use $x_i$ to refer to its $i$-th entry for $i \in [1, n]$, and express $\boldsymbol{x}$ as $x_1 x_2 \cdots x_n$. For another $\bm{y} \in \Sigma_q^m$, the \emph{concatenation} of $\bm{x}$ and $\bm{y}$ is denoted by $\bm{x}\bm{y}\triangleq x_1 \cdots x_n y_1 \cdots y_m$.
If there exists a set $\mathcal{I} = \{i_1, i_2, \ldots, i_m\}$ with $1 \leq i_1 < i_2 < \cdots < i_m \leq n$ such that $\bm{y} = \bm{x}_{\mathcal{I}} \triangleq x_{i_1} x_{i_2} \cdots x_{i_m}$, we say that $\bm{y}$ is a \emph{subsequence} of $\bm{x}$. In particular, when $\mathcal{I}$ is an interval, $\bm{y}$ is called a \emph{substring} of $\bm{x}$.

For two positive integers $m < n$, a sequence $\bm{x} \in \Sigma_q^n$ is called \emph{$m$-periodic} if $x_i = x_{i+m}$ for $i \in [1,n-m]$. Note that any sequence of length at most $m$ is regarded as $m$-periodic. A \emph{$b$-run} in a sequence is a substring that is $b$-periodic with maximal length, and we use $r_b(\bm{x})$ to denote the number of $b$-runs in $\bm{x}$.
In particular, when \(b = 1\), we abbreviate a \(b\)-run and \(r_b(\bm{x})\) as a \emph{run} and \(r(\bm{x})\), respectively.
An \emph{alternating segment} in a sequence is a substring that is $2$-periodic but not $1$-periodic with maximal length, and we use $\alpha(\bm{x})$ to denote the number of alternating segments in $\bm{x}$.

Let $X$ be a set or an interval, we use $|X|$ to refer to its size. Let $A$ be an event, we define its indicator function as $\mathbbm{1}_A$, where $\mathbbm{1}_A = 1$ if $A$ occurs, and $\mathbbm{1}_A = 0$ otherwise.

\subsection{Error Models}

Let \(n \geq b + 1\) with \(b \geq 1\). We say that \(\boldsymbol{x} \in \Sigma_q^n\) suffers \emph{a burst of \(b\) deletions} at position \(i\), where \(1 \leq i \leq n - b + 1\), if the resultant sequence is \(x_1 \cdots x_{i-1} x_{i+b} \cdots x_n \in \Sigma_q^{n-b}\). When \(n \geq bt + 1\), where \(t\) is a positive integer, the \emph{b-burst-deletion ball} of radius \(t\) centered at \(\boldsymbol{x}\), denoted as \(\mathcal{D}_{t}^{b}(\boldsymbol{x})\), represents the set of all sequences in \(\Sigma_q^{n-bt}\) obtainable from \(\boldsymbol{x}\) after \(t\) bursts of deletions, each of length $b$.
Similarly, we say that \(\boldsymbol{x}\) suffers \emph{a burst of \(b\) insertions} at position \(i\), where \(1 \leq i \leq n + 1\), if the resultant sequence is of the form \(x_1 \cdots x_{i-1} y_1 \cdots y_b  x_{i} \cdots x_n \in \Sigma_q^{n+b}\), where \(y_1 \cdots y_b\) denotes the inserted segment of length \(b\). The \emph{b-burst-insertion ball} of radius \(t\) centered at \(\boldsymbol{x}\) is defined as \(\mathcal{I}_{t}^{b}(\boldsymbol{x})\), representing the set of all sequences in \(\Sigma_q^{n+bt}\) that can be obtained from \(\boldsymbol{x}\) after \(t\) bursts of insertions, each of length exactly \(b\).

Let \(n \geq bt + 1\) with \(b, t \geq 1\). The \emph{fixed-length $b$-burst Levenshtein ball} of radius \(t\) centered at the sequence \(\bm{x} \in \Sigma_q^n\), denoted as \(\mathcal{L}_t^b(\bm{x})\), represents the set of sequences in \(\Sigma_q^n\) that can be obtained from \(\bm{x}\) through \(t\) bursts of deletions followed by \(t\) bursts of insertions, where each burst has a fixed length \(b\).

\begin{observation}\label{order}
The order of burst-deletion and burst-insertion operations is commutative concerning the final result. Therefore, without loss of generality, we adopt the convention that burst-deletions precede burst-insertions.
\end{observation}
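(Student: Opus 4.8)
The plan is to show that the composite map "apply $t$ burst-deletions, then $t$ burst-insertions" produces the same collection of output sequences as "apply $t$ burst-insertions, then $t$ burst-deletions," so that either convention yields the same ball $\mathcal{L}_t^b(\bm{x})$. Since a burst operation of length $b$ is itself a composition of $b$ single symbol operations performed at consecutive coordinates, it suffices to argue at the level of a single burst-deletion and a single burst-insertion and then iterate; alternatively one can argue directly with bursts, which I will do since the bookkeeping is not much worse. I would first reduce to the atomic case: it is enough to prove that if $\bm{z}$ is obtained from $\bm{x}$ by one burst-deletion of length $b$ at position $i$ followed by one burst-insertion of length $b$ at position $j$ (inserting a string $\bm{u}$), then $\bm{z}$ can equally be obtained by one burst-insertion followed by one burst-deletion, and vice versa. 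Given that, a routine induction on $t$ lets us migrate deletions past insertions one at a time until all insertions precede all deletions (or the reverse), which establishes the claim for arbitrary $t$.

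For the atomic swap, I would split into cases according to the relative positions of the deleted window $[i,i+b-1]$ in $\bm{x}$ and the insertion site $j$ in the intermediate sequence $\bm{x}' = x_1\cdots x_{i-1}x_{i+b}\cdots x_n$. If the insertion happens entirely to the left of the deletion window ($j \le i-1$ in the original indexing) or entirely to the right (insertion site at or beyond $i$ in $\bm{x}'$, corresponding to original position $\ge i+b$), the two operations act on disjoint coordinate ranges and commute verbatim: inserting $\bm{u}$ first and then deleting the (appropriately shifted) window $[i,i+b-1]$ gives the identical sequence $\bm{z}$. The only genuinely interactive case is when the insertion site $j$ falls inside the "seam" created by the deletion, i.e. $j$ corresponds to original position exactly $i$ after the block $x_i\cdots x_{i+b-1}$ has been removed. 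Here $\bm{z} = x_1\cdots x_{i-1}\,u_1\cdots u_b\,x_{i+b}\cdots x_n$, and I would exhibit the reversed order explicitly: first insert the length-$b$ string $\bm{u}$ at original position $i$ (producing $x_1\cdots x_{i-1}u_1\cdots u_b x_i\cdots x_n$), then apply a burst-deletion of length $b$ at the position of the block $x_i\cdots x_{i+b-1}$, which now sits at coordinates $[i+b, i+2b-1]$; the result is exactly $\bm{z}$. Symmetrically, any $\bm{z}$ reachable by "insert then delete" is reachable by "delete then insert" by the same coordinate computation run backwards, so the two orders define the same set.

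The main obstacle — really the only subtlety — is handling the seam case carefully and making sure the index shifts line up, since after a burst-deletion the coordinates of the surviving suffix shrink by $b$, and after a burst-insertion they grow by $b$; one must be consistent about whether positions are measured in $\bm{x}$, in $\bm{x}'$, or in $\bm{z}$. A clean way to sidestep most of this is to describe each intermediate and final sequence intrinsically as a concatenation of explicit substrings of $\bm{x}$ interleaved with the inserted strings, so that the equality of the two orders becomes a literal equality of concatenations with no index arithmetic. I would also remark that the length constraint $n \ge bt+1$ guarantees there is always room to perform the deletions, and that since insertions may use arbitrary alphabet strings, no reachability is lost by reordering. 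With the atomic swap in hand, the induction on $t$ is immediate: repeatedly apply the swap to the rightmost deletion that still lies to the left of some insertion, decreasing the number of such inversions each time, until the desired ordering is reached; this completes the proof that $\mathcal{L}_t^b(\bm{x})$ is well defined independently of the order, justifying the stated convention.
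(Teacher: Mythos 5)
Your proposal is correct and follows essentially the same route as the paper: an atomic swap of one burst-deletion past one burst-insertion via a case analysis on the relative positions (with the insertion index shifted by $b$ when it lies at or beyond the deleted window), followed by iterating the swap to reorder all operations. The paper's proof is just a terser version of this, handling the two cases $i\le j$ and $i>j$ with the explicit shifted positions $j+b$ and $i+b$ and leaving the iteration over $t$ implicit.
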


\begin{IEEEproof}
Assume that \(\bm{x}\) undergoes a burst of \(b\) deletions at position \(i\) followed by a burst of \(b\) insertions at position \(j\), resulting in \(\bm{y}\). If \(i \leq j\), then \(\bm{y}\) can also be obtained from \(\bm{x}\) by first performing a burst of \(b\) insertions at position \(j + b\) followed by a burst of \(b\) deletions at position \(i\). If \(i > j\), then \(\bm{y}\) can be achieved from \(\bm{x}\) by first performing a burst of \(b\) insertions at position \(j\) followed by a burst of \(b\) deletions at position \(i + b\). The argument for the other direction can be established similarly. This completes the proof.
\end{IEEEproof}

\subsection{Useful Conclusions}

We first review the size of a burst-deletion/insertion ball.

\begin{lemma}\cite[Claim 3.1]{Sun-23-IT-BDR}\label{lem:del_ball}
    Let \( \mathcal{K} = [k_1,k_2] \subseteq [1,n] \) with \( k_2 - k_1 \geq b \geq 1 \) and \( n \geq b + 1 \). For any sequence \( \bm{x} \in \Sigma_q^n \), \( \bm{x}_{\mathcal{K}} \) is a $b$-run of \( \bm{x} \) if and only if \(\bm{x}_{[1,n]\setminus [k,k+b-1]} = \bm{x}_{[1,n]\setminus [k',k'+b-1]}\)
    for any \( k, k' \in [k_1,k_2-b+1] \).
\end{lemma}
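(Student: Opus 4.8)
The plan is to reduce the claimed equivalence to one elementary observation about how a burst of $b$ deletions is affected by shifting its position by a single coordinate, and then to chain that observation. First I would establish the following one-step fact: for every $k \in [k_1, k_2 - b]$, the sequences $\bm{x}_{[1,n]\setminus[k,k+b-1]}$ and $\bm{x}_{[1,n]\setminus[k+1,k+b]}$ are equal if and only if $x_k = x_{k+b}$. This follows by a direct coordinatewise comparison of these two length-$(n-b)$ sequences: they agree on the first $k-1$ coordinates (both reading $x_1\cdots x_{k-1}$) and on the last $n-b-k$ coordinates (both reading $x_{k+b+1}\cdots x_n$), while the $k$-th coordinate equals $x_{k+b}$ in the first sequence and $x_k$ in the second, so equality holds precisely when $x_k = x_{k+b}$.

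Next I would chain this fact. Fix $k \le k'$ in $[k_1, k_2-b+1]$. Applying the one-step fact successively to the pairs $(k,k+1),(k+1,k+2),\dots,(k'-1,k')$ — whose intermediate indices all lie in $[k_1, k_2-b]$ because $k \ge k_1$ and $k'-1 \le k_2 - b$ — shows, by transitivity of equality, that $\bm{x}_{[1,n]\setminus[k,k+b-1]} = \bm{x}_{[1,n]\setminus[k',k'+b-1]}$ whenever $x_i = x_{i+b}$ for all $i \in [k, k'-1]$; conversely, specializing to $k' = k+1$ shows that equality of two adjacent burst-deletions forces $x_k = x_{k+b}$. Hence the right-hand side of the lemma — equality of $\bm{x}_{[1,n]\setminus[k,k+b-1]}$ and $\bm{x}_{[1,n]\setminus[k',k'+b-1]}$ for \emph{all} $k,k' \in [k_1, k_2-b+1]$ — is equivalent to the single condition that $x_i = x_{i+b}$ for every $i \in [k_1, k_2 - b]$, i.e. that the substring $\bm{x}_{\mathcal{K}} = x_{k_1}\cdots x_{k_2}$ is $b$-periodic. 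The hypothesis $k_2 - k_1 \ge b$ guarantees this range of $i$ is nonempty, so the condition is genuinely a periodicity constraint, and also that every index invoked in the chaining lies in $[k_1, k_2-b+1]$.

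It then remains to identify ``$\bm{x}_{\mathcal{K}}$ is $b$-periodic'' with ``$\bm{x}_{\mathcal{K}}$ is a $b$-run of $\bm{x}$''. In the forward direction this is immediate, since by definition a $b$-run is in particular a $b$-periodic substring, so the previous paragraph delivers the deletion condition. In the reverse direction, the chaining argument recovers exactly that $\bm{x}_{\mathcal{K}}$ is $b$-periodic on the full interval $\mathcal{K} = [k_1,k_2]$ under consideration, which is the content of its being the $b$-run. I do not expect any conceptual obstacle here: the only thing to be careful about is the index bookkeeping — checking that every $k$ used in the chaining indeed lies in $[k_1,k_2-b+1]$, that $[k,k'-1]\subseteq[k_1,k_2-b]$, and that the coordinatewise comparison in the one-step fact is stated correctly at the two ends of the sequence — together with keeping the small-parameter regime (e.g.\ $n$ close to $b$, or $|\mathcal{K}| = b+1$) consistent with the stated constraints $n \ge b+1$ and $k_2 - k_1 \ge b$.
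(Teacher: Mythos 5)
The paper never proves this lemma --- it is imported verbatim from the cited reference --- so there is no in-paper argument to compare against. Your one-step observation (shifting the deleted window by one coordinate preserves the punctured sequence iff $x_k=x_{k+b}$) and the chaining over adjacent windows are correct, and they constitute the natural proof of the substantive content: the right-hand side, equality of all the punctured sequences over $k,k'\in[k_1,k_2-b+1]$, is equivalent to $x_i=x_{i+b}$ for every $i\in[k_1,k_2-b]$, i.e.\ to $\bm{x}_{\mathcal{K}}$ being $b$-periodic. The index bookkeeping you describe is all consistent with the hypotheses $k_2-k_1\geq b$ and $n\geq b+1$.

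The one place you wave your hands is exactly where the statement itself is loose. The paper defines a $b$-run as a $b$-periodic substring of \emph{maximal} length, and your reverse direction only recovers $b$-periodicity of $\bm{x}_{\mathcal{K}}$, not maximality, so your closing claim that this ``is the content of its being the $b$-run'' is not justified. Indeed, with that definition the ``if'' direction of the lemma is literally false: take $\bm{x}=0000$, $b=1$, $\mathcal{K}=[1,2]$; then $\bm{x}_{[1,4]\setminus[1,1]}=\bm{x}_{[1,4]\setminus[2,2]}=000$, so the right-hand side holds, yet $\bm{x}_{[1,2]}=00$ is not a run because it extends to the longer $1$-periodic substring $0000$. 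No chaining argument can close this, because the right-hand side is insensitive to whether $\mathcal{K}$ can be enlarged. This is a defect of the cited claim rather than of your argument --- what is genuinely needed downstream is the equivalence with $b$-periodicity, with maximality handled separately --- but you should either prove the corrected statement (replacing ``is a $b$-run'' by ``is $b$-periodic'') or add the maximality of $\mathcal{K}$ as an explicit hypothesis, rather than asserting there is no conceptual obstacle at precisely the step where the stated equivalence breaks.
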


This lemma establishes the expression for the size of a burst-deletion ball, for which we formalize as follows.
\begin{lemma}\cite[Lemma IV.4]{Lan-25}\label{del_ball}
For any \( \bm{x} \in \Sigma_q^n \) and $b\geq 1$, it holds that $\big|D_1^b(\bm{x})\big| = r_b(\bm{x})= 1+ |\{i\in [1,n-b]: x_i\neq x_{i+b}\}| \leq n-b+1$.
\end{lemma}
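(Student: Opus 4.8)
The plan is to count the distinct sequences produced by a single burst of $b$ deletions and then match that count with $r_b(\bm{x})$ via Lemma~\ref{lem:del_ball}. Since a burst of $b$ deletions at position $i$ turns $\bm{x}$ into $\bm{x}_{[1,n]\setminus[i,i+b-1]}$, we have $D_1^b(\bm{x})=\{\bm{x}_{[1,n]\setminus[i,i+b-1]}:i\in[1,n-b+1]\}$, which already yields $|D_1^b(\bm{x})|\le n-b+1$ and, since the set $S:=\{i\in[1,n-b]:x_i\neq x_{i+b}\}$ lies in $[1,n-b]$, also $1+|S|\le n-b+1$. The real task is thus to decide which of the $n-b+1$ candidate sequences coincide.

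The key step is an equality criterion. For $1\le i\le j\le n-b+1$ I would compare $\bm{x}_{[1,n]\setminus[i,i+b-1]}$ and $\bm{x}_{[1,n]\setminus[j,j+b-1]}$ entry by entry in the length-$(n-b)$ output word: at output position $k\le i-1$ both read $x_k$, at output position $k\ge j$ both read $x_{k+b}$, while at output position $\ell\in[i,j-1]$ the first reads $x_{\ell+b}$ and the second reads $x_\ell$. Hence the two deleted sequences are equal if and only if $x_\ell=x_{\ell+b}$ for all $\ell\in[i,j-1]$, i.e.\ if and only if $[i,j-1]\cap S=\emptyset$; in particular one breakpoint of $S$ between $i$ and $j$ already forces them apart.

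Consequently the map $i\mapsto\bm{x}_{[1,n]\setminus[i,i+b-1]}$ is constant on each maximal block of consecutive integers of $[1,n-b+1]$ whose interior meets no element of $S$, and is injective across such blocks; the blocks are delimited exactly by the elements of $S$, so there are $|S|+1$ of them and $|D_1^b(\bm{x})|=|S|+1=1+|\{i\in[1,n-b]:x_i\neq x_{i+b}\}|$. It remains to see that these blocks are in bijection with the $b$-runs of $\bm{x}$, so that $r_b(\bm{x})=|S|+1$ as well. A block $[a,a']$ gives the substring $\bm{x}_{[a,a'+b-1]}$, which is $b$-periodic because $x_j=x_{j+b}$ for $j\in[a,a'-1]$ and is length-maximal because on its left either $a=1$ or $a-1\in S$ and on its right either $a'=n-b+1$ or $a'\in S$; conversely a $b$-run $[k_1,k_2]$ comes from the block $[k_1,k_2-b+1]$, and for runs of length at least $b+1$ Lemma~\ref{lem:del_ball} gives directly that the deletion positions inside a single $b$-run all produce the same sequence.

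The step I expect to be the crux is the ``distinct blocks yield distinct words'' half of the equality criterion: one must verify that after performing the two different bursts of deletions a witnessing mismatch actually survives at some output coordinate, which is exactly what the coordinatewise comparison above is arranged to guarantee. Some care is also needed at the boundary cases — deletions near the two ends of $\bm{x}$, and $b$-runs of the minimal possible length $b$, for which Lemma~\ref{lem:del_ball} is vacuous but the corresponding block is a singleton and the argument goes through unchanged.
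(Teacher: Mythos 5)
Your argument is correct. Note that the paper does not prove this lemma at all: it is imported verbatim from \cite[Lemma IV.4]{Lan-25}, with the run characterization of Lemma~\ref{lem:del_ball} (itself cited from \cite{Sun-23-IT-BDR}) offered as the underlying structural fact; so you have supplied a self-contained proof of a result the paper only cites. Your route is the natural one and it is sound: the coordinatewise comparison correctly shows that the outputs of bursts at positions $i\le j$ agree on coordinates in $[1,i-1]\cup[j,n-b]$ and differ at an output coordinate $\ell\in[i,j-1]$ precisely when $x_\ell\ne x_{\ell+b}$, so the two outputs coincide if and only if $[i,j-1]$ misses your set $S=\{\ell\in[1,n-b]:x_\ell\ne x_{\ell+b}\}$. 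This yields the partition of $[1,n-b+1]$ into $|S|+1$ blocks and, as a by-product, actually re-proves the ``if and only if'' of Lemma~\ref{lem:del_ball} rather than merely invoking it. The bijection between blocks and $b$-runs is also handled correctly, including the boundary conventions: a block $[a,a']$ with $a=1$ or $a-1\in S$ on the left and $a'=n-b+1$ or $a'\in S$ on the right corresponds exactly to a maximal $b$-periodic substring $\bm{x}_{[a,a'+b-1]}$ of length $a'-a+b\ge b$, and conversely. The only cosmetic remark is that your appeal to Lemma~\ref{lem:del_ball} in the last paragraph is redundant, since your direct comparison already establishes constancy of the deletion map on each block.
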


For the size of a burst-insertion ball, Lan et al.~\cite{Lan-25} derived the following conclusion.
\begin{lemma}\cite[Theorem III.2]{Lan-25}\label{lem:ins_ball}
For any \( \bm{x} \in \Sigma_q^n \) and $b,t\geq 1$, it holds that $\big|I_t^b(\bm{x})\big| = q^{t(b-1)}\sum_{i=0}^{t}\binom{n+t}{i}(q-1)^i$.
\end{lemma}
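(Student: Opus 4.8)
The plan is to reduce the count to a recursion that mirrors Levenshtein's recursion for ordinary insertion balls, up to an overall factor $q^{t(b-1)}$, and then solve it. The starting point is the elementary reformulation $I_t^b(\bm{x}) = \{\bm{z} \in \Sigma_q^{n+bt} : \bm{x} \in \mathcal{D}_t^b(\bm{z})\}$ (immediate from the definitions), together with the following combinatorial description of membership: $\bm{z} \in I_t^b(\bm{x})$ if and only if $\bm{x}$ occurs as a subsequence of $\bm{z}$ along some index set $J \subseteq [1, n+bt]$ whose complement decomposes into maximal runs of consecutive positions, each of length divisible by $b$ (so that the total complement size is $bt$). The ``if'' direction is clear — delete the complementary runs $b$ coordinates at a time — and for ``only if'' I would track, in $\bm{z}$-coordinates, the set of positions removed by $t$ successive bursts of $b$ deletions, checking inductively that it always has this shape: a new burst either lies in a fresh region, or is adjacent to, straddles, or lies inside previously deleted runs, and in every case the affected maximal run changes length only by multiples of $b$.

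For the recursion, fix $n, t \geq 1$ and partition $I_t^b(\bm{x})$ according to whether $z_1 = x_1$. If $z_1 \neq x_1$, order-preservation of subsequence embeddings forces position $1$ into the complement, so its maximal complementary run is $[1, b c_1]$ for some $c_1 \geq 1$; deleting the first $b$ coordinates only shortens this leading run by one block and leaves a valid configuration, so $z_{b+1} \cdots z_{n+bt} \in I_{t-1}^b(\bm{x})$. The inverse operation — prepend any $b$ symbols $u_1 \cdots u_b$ with $u_1 \neq x_1$ to an element of $I_{t-1}^b(\bm{x})$ — lands back in $I_t^b(\bm{x})$ with first symbol $\neq x_1$, and the two maps are mutually inverse, so this part of the ball has size $(q-1)q^{b-1}\,|I_{t-1}^b(\bm{x})|$. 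If instead $z_1 = x_1$, I would show $z_2 \cdots z_{n+bt} \in I_t^b(x_2 \cdots x_n)$: starting from any valid embedding, if the image of $x_1$ is not already position $1$, reroute it there; the former leading complementary run had length $\equiv 0 \pmod b$, and after the reroute it is a run of the same length shifted one step right, possibly merged with the next run, hence still of length divisible by $b$. Prepending $x_1$ is the inverse, so this part has size $|I_t^b(x_2 \cdots x_n)|$. Hence
\[
|I_t^b(x_1\cdots x_n)| \;=\; |I_t^b(x_2\cdots x_n)| \;+\; (q-1)q^{b-1}\,|I_{t-1}^b(x_1\cdots x_n)|.
\]

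With the base cases $|I_0^b(\bm{x})| = 1$ and $|I_t^b(\epsilon)| = q^{bt}$, an induction on $n+t$ shows $|I_t^b(\bm{x})|$ depends only on $n$ and $t$; call it $F(n,t)$. Substituting $F(n,t) = q^{t(b-1)} G(n,t)$ cancels the $q^{b-1}$ factor and turns the recursion into $G(n,t) = G(n-1,t) + (q-1)G(n,t-1)$ with $G(n,0) = 1$ and $G(0,t) = q^t$ — precisely the recursion satisfied by the ordinary $t$-insertion ball — and one checks via Pascal's rule that $G(n,t) = \sum_{i=0}^{t}\binom{n+t}{i}(q-1)^i$ is its (unique) solution. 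This yields $|I_t^b(\bm{x})| = q^{t(b-1)}\sum_{i=0}^{t}\binom{n+t}{i}(q-1)^i$.

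The main obstacle is the bookkeeping for the complementary-run structure: one must verify carefully that it is preserved both under successive bursts of deletions (the inductive step of the reformulation) and under the rerouting of the image of $x_1$ in the case $z_1 = x_1$, and that the two pairs of maps used in the recursion are genuine mutually inverse bijections onto the indicated subsets of the ball. Once the reformulation is in place, the rest is routine.
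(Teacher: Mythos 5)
Your proof is correct, but note that the paper itself contains no proof of this lemma to compare against: it is imported verbatim from Lan et al.~\cite{Lan-25}. Judged on its own, your argument is a sound self-contained derivation. The key reformulation --- $\bm{z}\in I_t^b(\bm{x})$ iff $\bm{x}$ embeds in $\bm{z}$ along an index set whose complement has every maximal run of length divisible by $b$ --- is the right invariant, and your inductive bookkeeping for it holds up: each new burst occupies $b$ positions that are consecutive in the current sequence, so in $\bm{z}$-coordinates it either creates a fresh run of length $b$ or is absorbed together with the old runs it touches into one run whose length is again a multiple of $b$; conversely any such configuration is realizable by clearing each maximal run block by block. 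The two cases of the recursion also check out: for $z_1\neq x_1$ the strip/prepend maps give a bijection with $(\Sigma_q\setminus\{x_1\})\times\Sigma_q^{b-1}\times I_{t-1}^b(\bm{x})$, and for $z_1=x_1$ the rerouting of the image of $x_1$ to position $1$ merely shifts the leading complementary run one step to the right (possibly merging it with the following run), preserving divisibility by $b$, so that part is in bijection with $I_t^b(x_2\cdots x_n)$. The resulting recursion $F(n,t)=F(n-1,t)+(q-1)q^{b-1}F(n,t-1)$ with $F(n,0)=1$ and $F(0,t)=q^{bt}$ reduces, after the substitution $F=q^{t(b-1)}G$, to Levenshtein's classical recursion, whose solution via Pascal's rule is the stated partial binomial sum; the formula and the recursion agree on small cases (e.g.\ $q=2$, $b=2$, $t=1$, $n=1,2$ give $6$ and $8$). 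The only cosmetic slip is the parenthetical suggestion that the divisibility condition forces the complement to have size $bt$ --- that size is forced by the length difference $|\bm{z}|-|\bm{x}|=bt$, independently of the run structure --- but this does not affect the argument.
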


We now review the intersection between burst-insertion balls.

\begin{lemma}\cite[Claims 3.2 and 3.3]{Sun-23-IT-BDR}\label{num_insertion}
For two distinct sequences \( \bm{x} = \bm{u}\bm{v}\bm{w} \) and \( \bm{y} = \bm{u}\bm{v'}\bm{w} \) in \( \Sigma_q^n \), where \( \bm{v} \) and \( \bm{v}' \) differ in both their initial and terminal symbols, the following holds:
\begin{itemize}
    \item If \( |\bm{v}|=d \geq b + 1 \), then
    \[
    I_1^b(\bm{x}) \cap I_1^b(\bm{y}) \subseteq \{ \bm{u}\bm{v'}_{[1,b]}\bm{v}\bm{w}, \bm{u}\bm{v}_{[1,b]}\bm{v'}\bm{w} \}.
    \]
    More precisely, it holds that
    \begin{itemize}
        \item $\boldsymbol{u} \boldsymbol{v}_{[1,b]}' \boldsymbol{v} \boldsymbol{w} \in I_b(\boldsymbol{x}) \cap I_b(\boldsymbol{y})$ if and only if $\boldsymbol{v}_{[1,d-b]}= \boldsymbol{v}_{[b+1,d]}'$;
        \item $\boldsymbol{u} \boldsymbol{v}_{[1,b]}\boldsymbol{v}' \boldsymbol{w} \in I_b(\boldsymbol{x}) \cap I_b(\boldsymbol{y})$ if and only if $\boldsymbol{v}_{[b+1,d]}= \boldsymbol{v}_{[1,d-b]}'$.
    \end{itemize}

    \item If \( |\bm{v}| = d \leq b \), then
    \[
    I_1^b(\bm{x}) \cap I_1^b(\bm{y}) = \{ \bm{u}\bm{v}\bar{\bm{v}}\bm{v'}\bm{w}, \bm{u}\bm{v'}\bar{\bm{v}}\bm{v}\bm{w} : \bar{\bm{v}} \in \Sigma_q^{b-d} \}.
    \]
\end{itemize}
\end{lemma}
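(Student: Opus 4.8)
The plan is to translate the statement into one about burst deletions and then to pin down the locations of the two relevant edit windows. Throughout write $p=|\bm{u}|$, $d=|\bm{v}|=|\bm{v}'|$, $s=|\bm{w}|$, so that $p+d+s=n$, and recall that $\bm{z}\in I_1^b(\bm{x})$ exactly when $\bm{z}\in\Sigma_q^{n+b}$ and $\bm{x}\in D_1^b(\bm{z})$; hence $\bm{z}\in I_1^b(\bm{x})\cap I_1^b(\bm{y})$ exactly when $\bm{z}\in\Sigma_q^{n+b}$ and there are indices $i_1,i_2$ such that deleting the length-$b$ window $[i_1,i_1+b-1]$ from $\bm{z}$ gives $\bm{x}$ and deleting $[i_2,i_2+b-1]$ from $\bm{z}$ gives $\bm{y}$. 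Since $\bm{x}\neq\bm{y}$ we have $i_1\neq i_2$, and I record that, as $\bm{v}$ and $\bm{v}'$ differ in their first and last symbols, the longest common prefix of $\bm{x},\bm{y}$ is exactly $\bm{u}$ and the longest common suffix is exactly $\bm{w}$.

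The first step is to localize the two windows. The block $\bm{z}_{[1,\min\{i_1,i_2\}-1]}$ is a prefix of both $\bm{x}$ and $\bm{y}$, so $\min\{i_1,i_2\}\le p+1$; dually, $\bm{z}_{[\max\{i_1,i_2\}+b,\,n+b]}$ is a suffix of both, so $\max\{i_1,i_2\}\ge p+d+1$. Feeding these bounds back in, one gets $\bm{z}_{[1,p]}=\bm{u}$ and $\bm{z}_{[p+d+b+1,\,n+b]}=\bm{w}$ (the former from the window with the larger index, the latter from the one with the smaller index), together with the fact that $\bm{z}_{[p+b+1,\,n+b]}$ equals the length-$(d+s)$ tail of $\bm{x}$ or of $\bm{y}$. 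This is enough to show that deleting the window $[p+1,p+b]$ recovers whichever of $\bm{x},\bm{y}$ the ``smaller'' window recovered and deleting $[p+d+1,p+d+b]$ recovers the other, i.e.\ one may assume $\{i_1,i_2\}=\{p+1,\,p+d+1\}$.

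With $\{i_1,i_2\}=\{p+1,p+d+1\}$, invoke the symmetry exchanging $(\bm{x},\bm{v})$ with $(\bm{y},\bm{v}')$ to assume $i_1=p+1$ and $i_2=p+d+1$. Deleting $[p+1,p+b]$ to obtain $\bm{x}$ and $[p+d+1,p+d+b]$ to obtain $\bm{y}$ pins $\bm{z}_{[1,p]}=\bm{u}$, $\bm{z}_{[p+1,p+d]}=\bm{v}'$, $\bm{z}_{[p+b+1,p+b+d]}=\bm{v}$, and $\bm{z}_{[p+d+b+1,n+b]}=\bm{w}$. If $d\le b$, the constrained blocks $[p+1,p+d]$ and $[p+b+1,p+b+d]$ are disjoint, leaving the $b-d$ coordinates $[p+d+1,p+b]$ free, so $\bm{z}=\bm{u}\bm{v}'\bar{\bm{v}}\bm{v}\bm{w}$ with $\bar{\bm{v}}\in\Sigma_q^{b-d}$; the reversed symmetry supplies $\bm{u}\bm{v}\bar{\bm{v}}\bm{v}'\bm{w}$, and a one-line check that each such sequence deletes appropriately to $\bm{x}$ and to $\bm{y}$ gives the claimed equality. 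If $d\ge b+1$, the two blocks overlap on the length-$(d-b)$ interval $[p+b+1,p+d]$; the overlap simultaneously forces the matching condition $\bm{v}_{[1,d-b]}=\bm{v}'_{[b+1,d]}$ and pins $\bm{z}=\bm{u}\bm{v}'_{[1,b]}\bm{v}\bm{w}$, while the reversed symmetry yields $\bm{u}\bm{v}_{[1,b]}\bm{v}'\bm{w}$ with the condition $\bm{v}_{[b+1,d]}=\bm{v}'_{[1,d-b]}$. Conversely, both candidates have length $n+b$, deleting $[p+1,p+b]$ from $\bm{u}\bm{v}'_{[1,b]}\bm{v}\bm{w}$ always recovers $\bm{x}$, and when the matching condition holds one rewrites this sequence as $\bm{u}\bm{v}'\bm{v}_{[d-b+1,d]}\bm{w}$ so that deleting $[p+d+1,p+d+b]$ recovers $\bm{y}$; this establishes the stated equivalences.

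I expect the main obstacle to be the localization/canonicalization in the second step: a priori the two windows may lie anywhere in the intervals permitted by the prefix and suffix bounds, and (when $d\le b$) they may even overlap, so some care is needed to argue that each window can be slid to its canonical position $p+1$ or $p+d+1$ without changing the deletion result. This rests on the run structure of burst deletions recorded in Lemma~\ref{lem:del_ball} and Lemma~\ref{del_ball}, i.e.\ concretely on the identities $\bm{z}_{[1,p]}=\bm{u}$ and $\bm{z}_{[p+d+b+1,n+b]}=\bm{w}$, and it must be phrased carefully, including the degenerate cases in which $\bm{u}$ or $\bm{w}$ is the empty word. Once the windows are canonical, the case split on $d$ versus $b$ and the converse verification are short computations.
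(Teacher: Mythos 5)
The paper does not actually prove this lemma: it is imported verbatim from \cite[Claims 3.2 and 3.3]{Sun-23-IT-BDR}, so there is no in-paper argument to compare against. Your self-contained proof is correct. The two key steps both check out: (i) the localization, since $\bm{z}_{[1,\min\{i_1,i_2\}-1]}$ is a common prefix of $\bm{x},\bm{y}$ and the longest such prefix is $\bm{u}$ (because $v_1\ne v_1'$), and dually for suffixes (because $v_d\ne v_d'$), forcing $\min\{i_1,i_2\}\le p+1$ and $\max\{i_1,i_2\}\ge p+d+1$; and (ii) the canonicalization, which is already fully justified by the identities $\bm{z}_{[1,p]}=\bm{u}$, $\bm{z}_{[p+b+1,n+b]}=(\text{tail of the sequence recovered by the smaller window})$, and $\bm{z}_{[p+d+b+1,n+b]}=\bm{w}$ — so the "obstacle" you flag at the end is in fact already dispatched by your own step two, and no appeal to the run-structure lemmas is needed. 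The case split then correctly yields the free block $\bar{\bm{v}}\in\Sigma_q^{b-d}$ when $d\le b$ and the overlap condition $\bm{v}_{[1,d-b]}=\bm{v}'_{[b+1,d]}$ (resp.\ its mirror) when $d\ge b+1$, and the converse verifications are the right one-line computations. The only cosmetic remark is that the displayed membership conditions in the statement use the notation $I_b(\cdot)$ where $I_1^b(\cdot)$ is meant; your proof handles the intended reading.
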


\section{Main Contributions}\label{sec:main}

In this section, we present our main contributions regarding to fixed-length $b$-burst Levenshtein balls.
Recall that $\mathcal{L}_t^b(\bm{x})$ denotes the fixed-length $b$-burst Levenshtein ball with radius $t$ centered at sequence $\bm{x}\in \Sigma_q^n$.

\subsection{Minimum Size}

In \cite{Bar-Lev-22-IT-ball}, Bar-Lev et al. demonstrated that for any center \(\bm{x} \in \Sigma_q^n\) and radius \(t\), the Hamming ball is a subset of the fixed-length Levenshtein ball. Moreover, the Hamming ball equals the fixed-length Levenshtein ball if the number of runs in the center is exactly one. One may generalize the definition of a Hamming ball to the concept of a $b$-burst Hamming ball and establish a similar conclusion regarding $b$-burst Hamming balls and fixed-length $b$-burst Levenshtein balls.
In what follows, we derive the minimum size of fixed-length $b$-burst Levenshtein balls via the size of $b$-burst-insertion balls, providing a new perspective on this problem.

\begin{theorem}\label{thm:minimum}
Let \(n \geq bt + 1\) with \(b, t \geq 1\). For any \(\bm{x} \in \Sigma_q^n\), it holds that
\begin{align*}
\big|\mathcal{L}_t^b(\bm{x})\big| \geq q^{t(b-1)} \sum\limits_{i=0}^t {n - t(b-1) \choose i} (q-1)^i,
\end{align*}
where the inequality holds with equality if and only if \(\big|\mathcal{D}_t^b(\bm{x})\big| = 1\). Recall from Lemma \ref{lem:del_ball} that when the number of \(b\)-runs in \(\bm{x}\) is exactly one, it holds that \(\big|\mathcal{D}_t^b(\bm{x})\big| = 1\).
\end{theorem}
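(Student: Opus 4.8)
The plan is to relate $\mathcal{L}_t^b(\bm{x})$ to a union of $b$-burst-insertion balls indexed by the $b$-burst-deletion ball of $\bm{x}$, and then bound the size of that union below by the size of a single term. By Observation~\ref{order}, every element of $\mathcal{L}_t^b(\bm{x})$ is obtained by first applying $t$ bursts of $b$ deletions to $\bm{x}$ — producing some $\bm{z} \in \mathcal{D}_t^b(\bm{x}) \subseteq \Sigma_q^{n-bt}$ — and then applying $t$ bursts of $b$ insertions to $\bm{z}$. Hence
\[
\mathcal{L}_t^b(\bm{x}) = \bigcup_{\bm{z} \in \mathcal{D}_t^b(\bm{x})} \mathcal{I}_t^b(\bm{z}).
\]
In particular, picking any single $\bm{z}_0 \in \mathcal{D}_t^b(\bm{x})$, we get $\big|\mathcal{L}_t^b(\bm{x})\big| \geq \big|\mathcal{I}_t^b(\bm{z}_0)\big|$, and by Lemma~\ref{lem:ins_ball} applied to $\bm{z}_0 \in \Sigma_q^{n-bt}$, the right-hand side equals $q^{t(b-1)}\sum_{i=0}^t \binom{(n-bt)+t}{i}(q-1)^i = q^{t(b-1)}\sum_{i=0}^t \binom{n-t(b-1)}{i}(q-1)^i$, which is exactly the claimed lower bound. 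This establishes the inequality and also makes clear the equality condition in one direction: if $\big|\mathcal{D}_t^b(\bm{x})\big| = 1$ the union has a single term and the bound is tight.

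For the converse direction of the equality characterization, I would show that whenever $\big|\mathcal{D}_t^b(\bm{x})\big| \geq 2$ the union is strictly larger than any one term. The natural strategy is to exhibit, given two distinct $\bm{z}, \bm{z}' \in \mathcal{D}_t^b(\bm{x})$, a sequence in $\mathcal{I}_t^b(\bm{z}')$ that does \emph{not} lie in $\mathcal{I}_t^b(\bm{z})$ — equivalently, to show $\mathcal{I}_t^b(\bm{z}') \not\subseteq \mathcal{I}_t^b(\bm{z})$. Since $\bm{z}$ and $\bm{z}'$ have the same length, $\mathcal{I}_t^b(\bm{z}') \subseteq \mathcal{I}_t^b(\bm{z})$ would force $\mathcal{I}_t^b(\bm{z}') = \mathcal{I}_t^b(\bm{z})$ by equal cardinality (Lemma~\ref{lem:ins_ball}), so it suffices to argue that distinct centers of the same length yield distinct $t$-burst-insertion balls. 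This follows because $\bm{z}$ can be recovered from $\mathcal{I}_t^b(\bm{z})$ (e.g. as the unique length-$(n-bt)$ sequence obtainable from every element of the ball by $t$ bursts of $b$ deletions, or by a shortest-common-supersequence / $b$-run argument), so $\mathcal{I}_t^b(\bm{z}) = \mathcal{I}_t^b(\bm{z}')$ implies $\bm{z} = \bm{z}'$. The last sentence of the theorem statement, that a single $b$-run implies $\big|\mathcal{D}_t^b(\bm{x})\big| = 1$, is immediate from Lemma~\ref{lem:del_ball}: iterating it shows a single $b$-run leaves only one possible outcome after each burst deletion.

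The step I expect to be the main obstacle is the rigorous justification that distinct equal-length centers have distinct $t$-burst-insertion balls (the injectivity of $\bm{z} \mapsto \mathcal{I}_t^b(\bm{z})$ on $\Sigma_q^{n-bt}$). For $t = 1$ one can invoke Lemma~\ref{num_insertion}: writing $\bm{z} = \bm{u}\bm{v}\bm{w}$, $\bm{z}' = \bm{u}\bm{v}'\bm{w}$ with $\bm{v}, \bm{v}'$ differing in first and last symbols, the intersection $\mathcal{I}_1^b(\bm{z}) \cap \mathcal{I}_1^b(\bm{z}')$ is confined to a set of size at most $2$ (when $d \geq b+1$) or $2q^{b-d}$ (when $d \leq b$), which is strictly smaller than $\big|\mathcal{I}_1^b(\bm{z})\big|$, so neither ball contains the other. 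For general $t$ one would either iterate this or give a direct reconstruction argument; I would present the reconstruction argument (recovering the center as the common length-$(n-bt)$ descendant) since it handles all $t$ uniformly and avoids a delicate induction. It is worth double-checking the degenerate small-length cases and confirming that $\mathcal{D}_t^b(\bm{x})$ is always nonempty under the hypothesis $n \geq bt+1$, so that the "pick any $\bm{z}_0$" step is legitimate.
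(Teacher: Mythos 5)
Your first half — the decomposition $\mathcal{L}_t^b(\bm{x})=\bigcup_{\bm{z}\in\mathcal{D}_t^b(\bm{x})}\mathcal{I}_t^b(\bm{z})$, the lower bound by a single term, and the evaluation via Lemma~\ref{lem:ins_ball} — is exactly the paper's argument and is fine, as is the observation that containment of two equal-length burst-insertion balls forces equality because their cardinalities agree. The problem is the converse of the equality characterization for general $t$. You correctly identify that everything reduces to the injectivity of $\bm{z}\mapsto\mathcal{I}_t^b(\bm{z})$ on $\Sigma_q^{n-bt}$, and your $t=1$ argument via Lemma~\ref{num_insertion} is complete (the intersection bound $\max\{2,2q^{b-d}\}$ is indeed strictly below $|\mathcal{I}_1^b(\bm{z})|$). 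But for $t\geq 2$ the "reconstruction argument" you propose to rely on is circular: the claim that $\bm{z}$ is the \emph{unique} length-$(n-bt)$ sequence lying in $\mathcal{D}_t^b(\bm{w})$ for every $\bm{w}\in\mathcal{I}_t^b(\bm{z})$ says precisely that no $\bm{z}'\neq\bm{z}$ satisfies $\mathcal{I}_t^b(\bm{z})\subseteq\mathcal{I}_t^b(\bm{z}')$, which by your own equal-cardinality reduction is exactly the injectivity statement you are trying to prove. No independent proof of it is given, and "iterating" Lemma~\ref{num_insertion} is not routine either, since $\mathcal{I}_t^b(\bm{z})\cap\mathcal{I}_t^b(\bm{z}')$ does not decompose into $t=1$ intersections in any obvious way. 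So there is a genuine gap at the one step you flagged.

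The gap is closable with an explicit witness, which is what the paper does: for $\bm{y}\neq\bm{y}'\in\mathcal{D}_t^b(\bm{x})$ with first disagreement at position $j$, it exhibits $\bm{z}=\bm{y}'_{[1,j-1]}(y'_j0^{b-1})^t\bm{y}'_{[j,n-bt]}\in\mathcal{I}_t^b(\bm{y}')\setminus\mathcal{I}_t^b(\bm{y})$. (Note this actually proves the slightly weaker but sufficient statement that no single $\mathcal{I}_t^b(\bm{y})$ contains the whole union, rather than full injectivity.) A variant that is arguably easier to verify is $\bm{z}=\bm{y}'_{[1,j]}(y'_j)^{tb}\bm{y}'_{[j+1,n-bt]}$: any $t$ burst deletions from $\bm{z}$ remove a set of $tb$ positions forming a union of intervals, so the $j$-th surviving symbol sits at some position in $[j,j+tb]$, where $\bm{z}$ is constantly $y'_j\neq y_j$; hence $\bm{y}\notin\mathcal{D}_t^b(\bm{z})$. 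Some such concrete construction is what your write-up is missing.
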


\begin{IEEEproof}
Observe that for any \( \bm{y} \in \mathcal{D}_t^b(\bm{x}) \), it holds that \( \mathcal{I}_t^b(\bm{y}) \subseteq \mathcal{L}_t^b(\bm{x}) \). By Lemma \ref{lem:ins_ball}, we can compute
\[
\big|\mathcal{L}_t^b(\bm{x})\big| \geq \big|\mathcal{I}_t^b(\bm{y})\big| = q^{t(b-1)}\sum_{i=0}^t {n-t(b-1) \choose i} (q-1)^i.
\]
Since \( \mathcal{L}_t^b(\bm{x}) = \bigcup_{\bm{y} \in \mathcal{D}_t^b(\bm{x})} \mathcal{I}_t^b(\bm{y}) \), it remains to show that for \( \bm{y}' \neq \bm{y} \in \mathcal{D}_t^b(\bm{x}) \), there exists some \( \bm{z} \in \mathcal{I}_t^b(\bm{y}') \) such that \( \bm{z} \notin \mathcal{I}_t^b(\bm{y}) \).
Let \( j \) be the first position in which \( \bm{y} \) and \( \bm{y}' \) differ. Consider the sequence \( \bm{z} = \bm{y}_{[1,j-1]}' (y_j' 0^{b-1})^t \bm{y}_{[j,n]}' \). It can be easily verified that \( \bm{z} \in \mathcal{I}_t^b(\bm{y}') \) and \( \bm{z} \notin \mathcal{I}_t^b(\bm{y}) \). This completes the proof.
\end{IEEEproof}

\subsection{Explicit Size}

Sala and Dolecek~\cite{Sala-13-ISIT-ball} derived an exact closed-form expression for the size of a fixed-length Levenshtein ball with unit radius by analyzing the intersection of two insertion balls with a radius of one.

\begin{lemma}\cite[Theorem 1]{Sala-13-ISIT-ball}
Let \(n \geq 2\). For any \(\bm{x} \in \Sigma_q^n\), it holds that
\begin{equation}\label{eq:b=1}
\big|\mathcal{L}_1^1(\bm{x})\big|
= \big(n(q-1)-1\big) \cdot r(\bm{x}) + 2 - \sum\limits_{i=1}^{\alpha(\bm{x})} s_i(\bm{x}),
\end{equation}
where \(r(\bm{x})\) denotes the number of runs in \(\bm{x}\), \(\alpha(\bm{x})\) denotes the number of alternating segments in \(\bm{x} \), and \(s_i(\bm{x})\) denotes the length of the \(i\)-th alternating segment in \(\bm{x}\).
\end{lemma}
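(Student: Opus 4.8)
The plan is to evaluate $\big|\mathcal{L}_1^1(\bm{x})\big|$ by an inclusion--exclusion over the single-deletion ball, using Lemma~\ref{num_insertion} in the case $b=1$ to control the pairwise intersections. By Observation~\ref{order},
\[
\mathcal{L}_1^1(\bm{x}) = \bigcup_{\bm{y}\in\mathcal{D}_1^1(\bm{x})}\mathcal{I}_1^1(\bm{y}),
\]
and by Lemma~\ref{del_ball} the set $\mathcal{D}_1^1(\bm{x})$ has exactly $r(\bm{x})$ elements, one obtained by deleting a symbol from each run of $\bm{x}$; denote them $\bm{y}^{(1)},\dots,\bm{y}^{(r(\bm{x}))}$, listed in run order, each of length $n-1$. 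Lemma~\ref{lem:ins_ball} with $t=b=1$ gives $\big|\mathcal{I}_1^1(\bm{y}^{(j)})\big| = 1+n(q-1)$ for every $j$, and $\bm{x}$ itself lies in all of these balls, since deleting a symbol and re-inserting it returns $\bm{x}$.

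The first substantial step is a multiplicity bound: apart from $\bm{x}$, no element of $\mathcal{L}_1^1(\bm{x})$ lies in three or more of the balls $\mathcal{I}_1^1(\bm{y}^{(j)})$. Granting this, removing the universally shared point $\bm{x}$ makes the inclusion--exclusion terminate at the pairwise level, so that
\[
\big|\mathcal{L}_1^1(\bm{x})\big| = 1 + r(\bm{x})\,n(q-1) - \sum_{1\le j<k\le r(\bm{x})}\Big(\big|\mathcal{I}_1^1(\bm{y}^{(j)})\cap\mathcal{I}_1^1(\bm{y}^{(k)})\big|-1\Big).
\]
To evaluate the pairwise term, fix $j<k$, let $\bm{u}$ and $\bm{w}$ be the longest common prefix and suffix of $\bm{y}^{(j)}$ and $\bm{y}^{(k)}$, and write $\bm{y}^{(j)}=\bm{u}\bm{v}\bm{w}$ and $\bm{y}^{(k)}=\bm{u}\bm{v}'\bm{w}$; then $\bm{v}$ and $\bm{v}'$ have the same length $d\ge1$ and differ in both their first and last symbols, and they are the blocks of $\bm{x}$ straddling runs $j$ through $k$, with one symbol removed from run $j$ and from run $k$ respectively. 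Applying Lemma~\ref{num_insertion} with $b=1$: if $d\ge2$, the intersection lies in the two-element set $\{\bm{u}\bm{v}'_{[1,1]}\bm{v}\bm{w},\ \bm{u}\bm{v}_{[1,1]}\bm{v}'\bm{w}\}$; since $\bm{x}$ belongs to the intersection it is one of these two, and the other belongs to the intersection exactly when the corresponding periodicity condition of Lemma~\ref{num_insertion} holds; if $d=1$ (which occurs precisely when runs $j$ and $k$ are adjacent), the intersection is exactly $\{\bm{u}\bm{v}\bm{v}'\bm{w},\bm{u}\bm{v}'\bm{v}\bm{w}\}$, again of size two and again containing $\bm{x}$. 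Thus in all cases $\big|\mathcal{I}_1^1(\bm{y}^{(j)})\cap\mathcal{I}_1^1(\bm{y}^{(k)})\big|-1\in\{0,1\}$, and unpacking the periodicity condition shows that it equals $1$ precisely when runs $j$ and $k$ are both met by a single alternating segment of $\bm{x}$.

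It then remains to count the contributing pairs and simplify. Since an alternating segment is strictly $2$-periodic, its interior symbols are singleton runs, so an alternating segment of length $s$ meets exactly $s$ runs; moreover distinct alternating segments share at most one run, so no pair of runs is counted by two segments. Hence $\sum_{1\le j<k\le r(\bm{x})}\big(|\mathcal{I}_1^1(\bm{y}^{(j)})\cap\mathcal{I}_1^1(\bm{y}^{(k)})|-1\big)$ reduces to a quantity determined only by the lengths $s_1(\bm{x}),\dots,s_{\alpha(\bm{x})}(\bm{x})$; substituting this into the displayed identity, together with $\big|\mathcal{I}_1^1(\bm{y}^{(j)})\big| = 1+n(q-1)$, yields the stated closed form, which as a consistency check collapses to $1+n(q-1)$ when $r(\bm{x})=1$, matching Theorem~\ref{thm:minimum}. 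The main obstacle is the two structural facts on which everything hinges: first, the multiplicity bound, equivalently the observation that the explicit two-candidate description in Lemma~\ref{num_insertion} forces any sequence $\ne\bm{x}$ in $\mathcal{I}_1^1(\bm{y}^{(j)})\cap\mathcal{I}_1^1(\bm{y}^{(k)})$ to be uniquely determined, so that a sequence lying in three such balls simultaneously must equal $\bm{x}$; and second, the case analysis of the decomposition $\bm{y}^{(j)}=\bm{u}\bm{v}\bm{w}$ when the deleted run has length greater than one or abuts the end of an alternating segment, which is what ultimately fixes the additive constant in the formula.
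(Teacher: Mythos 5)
Your setup is sound and is essentially the $b=1$ specialization of the paper's Section 4 argument: write $\mathcal{L}_1^1(\bm{x})$ as the union of the $r(\bm{x})$ insertion balls, note that every triple intersection collapses to $\{\bm{x}\}$ so that inclusion--exclusion terminates at the pairwise level, and use Lemma~\ref{num_insertion} to see that each pairwise intersection has size $1$ or $2$, the extra element being present exactly when runs $j$ and $k$ are covered by a common alternating segment. The gap is in the final step, which you leave uncomputed: the count you arrive at does \emph{not} yield the stated closed form. A maximal alternating segment of length $s$ consists of $s-1$ consecutive transitions and hence meets exactly $s$ consecutive runs, and no pair of runs is covered by two distinct segments, so the number of contributing pairs is $\sum_{i=1}^{\alpha(\bm{x})}\binom{s_i(\bm{x})}{2}$. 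Substituting this into your identity $\big|\mathcal{L}_1^1(\bm{x})\big| = 1 + r(\bm{x})\,n(q-1) - \#\{\text{contributing pairs}\}$ and using $\sum_i \big(s_i(\bm{x})-1\big) = r(\bm{x})-1$ gives
\[
\big|\mathcal{L}_1^1(\bm{x})\big| = \big(n(q-1)-1\big)\, r(\bm{x}) + 2 - \sum_{i=1}^{\alpha(\bm{x})}\binom{s_i(\bm{x})-1}{2},
\]
with $\binom{s_i-1}{2}$ where the statement has $s_i$. These are genuinely different quantities, and your $r(\bm{x})=1$ consistency check cannot see the difference because both correction terms vanish there.

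In fact the identity (\ref{eq:b=1}) is false as printed under the paper's definition of alternating segment, so no argument can reach it: for $\bm{x}=010$ one has $r=3$, $\alpha=1$, $s_1=3$, and the right-hand side of (\ref{eq:b=1}) equals $5$, while direct enumeration gives $\big|\mathcal{L}_1^1(010)\big|=7$; likewise $\bm{x}=0101$ gives $11$ versus the claimed $10$. The $\binom{s_i-1}{2}$ version above is the one consistent with Theorem~\ref{thm:explicit} at $b=1$, since one checks that $\sum_{i=1}^{n-2} g_{1,i}(\bm{x}) = \sum_{i=1}^{\alpha(\bm{x})}\binom{s_i(\bm{x})-1}{2}$ (e.g., both equal $1$ for $\bm{x}=010$). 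So your argument, carried to completion, is a correct proof of the corrected identity; as a proof of the statement as written, the final ``substitution'' step fails, and the statement itself should be flagged as a misquotation.
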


For any integer $b$, Lemma \ref{num_insertion} notes that the maximum intersection size between two $b$-burst-insertion balls with radius one grows exponentially with respect to $b$, complicating the determination of the explicit size of a fixed-length $b$-burst Levenshtein ball. In this paper, we propose a systematic approach to calculating this quantity.
Particularly, for the case where $b = 1$, our result is simpler than that of Sala and Dolecek~\cite{Sala-13-ISIT-ball}. In our expression, the total number of summands in each term does not depend on $\bm{x}$, whereas in their work, the term $\sum_{i=1}^{\alpha(\bm{x})} s_i(\bm{x})$ is more complex since both every summand and the overall number of summands depend on $\bm{x}$.

\begin{theorem}\label{thm:explicit}
    Let $n\geq b+1$ with $b\geq 1$, for any $\bm{x}\in\Sigma_q^n$, it holds that
    \begin{flalign}\label{E_1_1_equ}
    \big|\mathcal{L}_1^b(\bm{x})\big|
    &=q^{b-1}\big((n-b+1)(q-1)-1\big) \cdot r_b(\bm{x})+2q^{b-1}-\sum\limits_{j=1}^{b-1} q^{b
    -j-1}\cdot  f_{b,j}(\bm{x})-\sum\limits_{i=1}^{n-2b} g_{b,i}(\bm{x}),
    \end{flalign}
    where $r_b(\bm{x})$ denotes the number of $b$-runs in $\bm{x}$, $f_{b,j}(\bm{x})\triangleq \big|\{i\in[1,n-b-j]:x_i\neq x_{i+b},x_{i+j}\neq x_{i+j+b}\} \big|$ for $j\in[b-1]$, and $g_{b,i}(\bm{x})\triangleq \big| \{j\in [i+b,n-b]:x_i\ne x_{i+b},x_{j}\ne x_{j+b},\bm{x}_{[i,j+b]}\text{ is $2b$-periodic}\} \big|$.
\end{theorem}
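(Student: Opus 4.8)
The plan is to compute $\big|\mathcal{L}_1^b(\bm{x})\big|$ via inclusion–exclusion on the union $\mathcal{L}_1^b(\bm{x}) = \bigcup_{\bm{y}\in\mathcal{D}_1^b(\bm{x})} \mathcal{I}_1^b(\bm{y})$. By Lemma~\ref{del_ball}, the number of distinct burst-deletion descendants is $r_b(\bm{x})$, and by Lemma~\ref{lem:ins_ball} each insertion ball has size $q^{b-1}\big((n-b+1)(q-1)+1\big)$; so the ``main term'' $r_b(\bm{x})$ times this quantity overcounts, and I must subtract the pairwise intersections (higher-order intersections vanish, as I will argue). First I would set up a clean indexing of the $b$-runs: list the ``transition positions'' $i_1 < i_2 < \cdots$ where $x_{i}\neq x_{i+b}$, so consecutive deletion-descendants $\bm{y}^{(k)}, \bm{y}^{(k+1)}$ arise from deleting inside adjacent $b$-runs. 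The key structural fact I need is a description, for two deletion-descendants $\bm{y},\bm{y}'$ of $\bm{x}$, of $\mathcal{I}_1^b(\bm{y})\cap\mathcal{I}_1^b(\bm{y}')$, which I can extract from Lemma~\ref{num_insertion} after writing $\bm{y} = \bm{u}\bm{v}\bm{w}$, $\bm{y}' = \bm{u}\bm{v}'\bm{w}$ with $\bm{v},\bm{v}'$ being the two length-$d$ windows that differ, where $d$ is the distance between the two deletion sites (note $d$ can be anywhere from $1$ to $n-b$, and the two windows differ in their first and last symbols precisely because each deletion site sits at a $b$-transition).

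Next I would split the pairwise-intersection bookkeeping into two regimes mirroring Lemma~\ref{num_insertion}. The ``close'' regime is when the two deletion sites are within distance $b$, i.e.\ $d\le b$; then $|\mathcal{I}_1^b(\bm{y})\cap\mathcal{I}_1^b(\bm{y}')| = 2q^{b-d}$ by the second bullet of Lemma~\ref{num_insertion}. Pairs of $b$-transition positions $(i,i+j)$ at distance $j\le b-1$ are counted by $f_{b,j}(\bm{x})$; each contributes $2q^{b-j-1}$ to the overcount after accounting for the factor $q$-discrepancy in indexing ($d = j$ versus the window-length convention), and summing over $j\in[1,b-1]$ gives the term $\sum_{j=1}^{b-1} q^{b-j-1} f_{b,j}(\bm{x})$. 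The ``far'' regime is $d\ge b+1$: here the first bullet of Lemma~\ref{num_insertion} says the intersection has size $0$, $1$, or $2$, where a contribution of size $1$ or $2$ occurs iff one (or both) of the periodicity conditions $\bm{v}_{[1,d-b]} = \bm{v}'_{[b+1,d]}$ or $\bm{v}_{[b+1,d]} = \bm{v}'_{[1,d-b]}$ holds. I need to translate these two conditions back into a statement about $\bm{x}$: each says that a certain substring of $\bm{x}$ spanning the two deletion sites is $2b$-periodic, which is exactly the predicate defining $g_{b,i}(\bm{x})$ (with $j = i+d$ the second transition site, and $\bm{x}_{[i,j+b]}$ $2b$-periodic). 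I would verify that when $\bm{x}_{[i,j+b]}$ is $2b$-periodic both conditions in fact hold simultaneously — so the contribution is exactly $2$ — giving the term $\sum_{i=1}^{n-2b} g_{b,i}(\bm{x})$ after one checks that the relevant intersection elements are genuinely new (not already subtracted) and that no pair with $d\ge b+1$ contributes an odd amount.

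Then I would handle the telescoping constant: the raw main term is $r_b(\bm{x})\cdot q^{b-1}\big((n-b+1)(q-1)+1\big)$, and I want to massage it into $q^{b-1}\big((n-b+1)(q-1)-1\big)r_b(\bm{x}) + 2q^{b-1}$. The discrepancy is $2q^{b-1} r_b(\bm{x}) - 2q^{b-1}$, i.e.\ $2q^{b-1}(r_b(\bm{x})-1)$, which is exactly $2q^{b-1}$ times the number of consecutive pairs of $b$-runs; this will emerge naturally from the inclusion–exclusion because each adjacent pair of runs is at some distance $d$ and contributes its intersection, and the ``$2q^{b-1}$ per adjacent pair'' piece should be separated out from the finer corrections. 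I would be careful here: one must make sure that the $f_{b,j}$ and $g_{b,i}$ sums are set up to count exactly the \emph{surplus} beyond this uniform ``$2q^{b-1}$ per adjacent-run-pair'' baseline, which is the reason the corrections are subtracted rather than the full intersection sizes. Finally I would argue triple intersections $\mathcal{I}_1^b(\bm{y})\cap\mathcal{I}_1^b(\bm{y}')\cap\mathcal{I}_1^b(\bm{y}'')$ are empty — this follows since any element of $\mathcal{I}_1^b(\bm{y})\cap\mathcal{I}_1^b(\bm{y}')$ is determined by which ``direction'' the extra block sits relative to the mismatch window, and a third ball forces an incompatible structure — so no $+$ terms survive.

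The main obstacle I anticipate is the bookkeeping in the ``far'' regime: precisely matching the two periodicity conditions of Lemma~\ref{num_insertion} to the single $2b$-periodic predicate of $g_{b,i}(\bm{x})$, confirming that the two conditions coincide (so each qualifying $(i,j)$ contributes exactly $2$, not $1$), and checking that these intersection elements are not double-subtracted against the close-regime corrections or against the uniform baseline. A secondary subtlety is edge effects at the two ends of $\bm{x}$ (deletions near position $1$ or $n-b+1$, where the windows $\bm{u}$ or $\bm{w}$ degenerate), which is presumably why the $g_{b,i}$ sum runs only to $n-2b$ and the $f_{b,j}$ sum to $n-b-j$; I would check these ranges against small cases ($b=1$, recovering a simplified form of Sala–Dolecek, and $b=2$) to be sure the index bounds are exactly right.
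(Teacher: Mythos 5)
There is a genuine gap at the heart of your counting scheme: the assertion that triple intersections $\mathcal{I}_1^b(\bm{y})\cap\mathcal{I}_1^b(\bm{y}')\cap\mathcal{I}_1^b(\bm{y}'')$ are empty is false, and the plain pairwise inclusion--exclusion built on it cannot yield the stated formula. The center $\bm{x}$ itself lies in $\mathcal{I}_1^b(\bm{y})$ for \emph{every} $\bm{y}\in\mathcal{D}_1^b(\bm{x})$ (reinsert the deleted block where it was removed), so every intersection of every order contains $\bm{x}$ and is nonempty whenever $r_b(\bm{x})\geq 3$. Worse, when two deletion sites are within distance $d\leq b$, the two insertion balls share a whole set of $q^{b-d}$ sequences containing $\bm{x}$, and these shared sets are nested across pairs (the paper's $A_{i,k}\subseteq A_{i,j}\subseteq A_{i,i+1}$ for $i<j<k$), so the pairwise intersections overlap heavily and subtracting them all over-subtracts. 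You can already see the arithmetic failing to close in your own outline: in the close regime you say each pair at distance $j$ contributes $2q^{b-j-1}$, yet the target formula subtracts only $q^{b-j-1}f_{b,j}(\bm{x})$; in the far regime you say each qualifying pair contributes exactly $2$, yet the target subtracts only $g_{b,i}(\bm{x})$ with weight $1$. The missing factor of $2$ in each case is exactly the ``old'' half of each intersection (the part containing $\bm{x}$), which is already absorbed by the $2q^{b-1}(r_b(\bm{x})-1)$ adjacent-pair baseline and must not be subtracted again; your closing caveat about double-subtraction is precisely where the argument breaks, and it cannot be repaired while keeping ``higher-order intersections vanish.''

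The paper resolves this not by inclusion--exclusion over all orders but by a telescoping identity: it decomposes $X_i\cap X_j=A_{i,j}\sqcup B_{i,j}$ via Lemma~\ref{num_insertion} (where $A_{i,j}$ is the always-present part containing $\bm{x}$ and $B_{i,j}$ is the genuinely new part), proves the nesting and disjointness relations $A_{i,k}\subseteq A_{i,j}$, $B_{i,k}\cap B_{i,j}=\emptyset$, and $(X_i\cap X_k)\setminus(X_i\cap X_j)=B_{i,k}$, and then shows by induction on $r_b(\bm{x})$ that $\bigl|\bigcup_i X_i\bigr|=\sum_i|X_i|-\sum_i|X_i\cap X_{i+1}|-\sum_i\sum_{j\geq i+2}|B_{i,j}|$. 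Your identification of the two regimes of Lemma~\ref{num_insertion}, the translation of the periodicity conditions into the $2b$-periodic predicate of $g_{b,i}$, and the $2q^{b-1}$-per-adjacent-pair baseline are all on target, but without replacing ``triples vanish'' by this nested structure the proof does not go through.
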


In the rest of this paper, for simplicity of notation, we assume \(n \geq 2b + 1\) so that the quantities \(f_{b,j}(\bm{x})\) and \(g_{b,i}(\bm{x})\) are meaningful.

Building upon this explicit size, we can investigate the extremal size and concentration properties of the fixed-length $b$-burst Levenshtein balls, similar to what has been done for the fixed-length Levenshtein balls in the references \cite{Bar-Lev-22-IT-ball,Wang-24-DCC-ball,He-23-ISIT-ball}.

\subsection{Maximum Size}

For a non-binary alphabet, Bar-Lev et al.~\cite{Bar-Lev-22-IT-ball} established that when the center \(\bm{x} \in \Sigma_q^n\) satisfies \(x_i,x_{i+1},x_{i+2}\) are pairwise distinct for \(i \in [1, n-2]\), its fixed-length Levenshtein ball with unit radius achieves the maximum size among all fixed-length Levenshtein balls with unit radius. In this work, we generalize this contribution to fixed-length $b$-burst Levenshtein balls by requiring that the center \(\bm{x} \in \Sigma_q^n\) satisfies \(x_i \neq x_{i+b}\) and \(x_i \neq x_{i+2b}\) for \(i \in [1, n-2b]\), for which we formalize as follows.

\begin{theorem}\label{thm:maximum}
Let $n\geq 2b+1$ with $b\geq 2$, for any $\bm{x}\in \Sigma_q^n$, it holds that
  \begin{equation}\label{eq:max}
    \big|\mathcal{L}_1^b(\bm{x})\big|\leq q^{b-1}(n-b+1)\big((n-b+1)(q-1)+1\big)= q^{b-1}(q-1)n^2+O(n).
  \end{equation}
When $q\geq 3$, we have 
\begin{align*}
\big|\mathcal{L}_1^b(\bm{x})\big|
&\leq q^{b-1}(n-b+1)\big((n-b+1)(q-1)-1\big)+2q^{b-1}\\
&\quad\quad\quad -\frac{(n-b)(q^{b-1}-1)}{q-1}+ \frac{q^{b}-q-(b-1)(q-1)}{(q-1)^2}\\
&= q^{b-1}(q-1)n^2+O(n),
\end{align*}
where the inequality holds for equality when $x_i\ne x_{i+b}$ for $i\in[1,n-b]$ and $x_{i}\ne x_{i+2b}$ for $i\in[1,n-2b]$.
\end{theorem}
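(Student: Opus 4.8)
The plan is to read everything off the explicit formula \eqref{E_1_1_equ} of Theorem~\ref{thm:explicit}. Throughout, put $S=\{i\in[1,n-b]:x_i\neq x_{i+b}\}$ and $\bar S=[1,n-b]\setminus S$, so that $r_b(\bm{x})=1+|S|=(n-b+1)-|\bar S|$, and abbreviate $A:=(n-b+1)(q-1)-1$, which is positive since $n\geq 2b+1$ forces $(n-b+1)(q-1)\geq b+2$. For the first inequality I simply drop the nonnegative quantities $f_{b,j}(\bm{x})$ and $g_{b,i}(\bm{x})$ from \eqref{E_1_1_equ} and use $r_b(\bm{x})\leq n-b+1$: this gives $\big|\mathcal{L}_1^b(\bm{x})\big|\leq q^{b-1}A(n-b+1)+2q^{b-1}$, and a one-line comparison (the gap is $-2q^{b-1}(n-b)\leq 0$) shows the right-hand side is at most $q^{b-1}(n-b+1)\big((n-b+1)(q-1)+1\big)$; reading off the leading term gives the stated $q^{b-1}(q-1)n^2+O(n)$.

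For the refined bound, the point is that $r_b(\bm{x})$, the $f_{b,j}(\bm{x})$'s and the $g_{b,i}(\bm{x})$'s are coupled, so rather than bounding them separately I compare $\big|\mathcal{L}_1^b(\bm{x})\big|$ directly against
\[
R:=q^{b-1}A(n-b+1)+2q^{b-1}-\sum_{j=1}^{b-1}q^{b-j-1}(n-b-j),
\]
which a routine geometric-sum identity shows equals the claimed right-hand side. Using the observation $f_{b,j}(\bm{x})=|\{i:i\in S,\ i+j\in S\}|$ (the constraint $i\leq n-b-j$ being automatic once $i,i+j\in S$), formula \eqref{E_1_1_equ} rearranges to
\[
\big|\mathcal{L}_1^b(\bm{x})\big|-R=-q^{b-1}A\,|\bar S|+\sum_{j=1}^{b-1}q^{b-j-1}\big((n-b-j)-f_{b,j}(\bm{x})\big)-\sum_{i=1}^{n-2b}g_{b,i}(\bm{x}).
\]
I then bound, for each $j$, $(n-b-j)-f_{b,j}(\bm{x})=\big|\{i\in[1,n-b-j]:i\in\bar S\text{ or }i+j\in\bar S\}\big|\leq 2|\bar S|$, whence $\sum_{j}q^{b-j-1}\big((n-b-j)-f_{b,j}(\bm{x})\big)\leq 2|\bar S|\,\frac{q^{b-1}-1}{q-1}$. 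Since $q\geq 3$ and $n\geq 2b+1$ make $q^{b-1}A>2\,\frac{q^{b-1}-1}{q-1}$, it follows that $\big|\mathcal{L}_1^b(\bm{x})\big|-R\leq-|\bar S|\big(q^{b-1}A-2\tfrac{q^{b-1}-1}{q-1}\big)-\sum_i g_{b,i}(\bm{x})\leq 0$, which is the refined bound, whose leading term is again $q^{b-1}(q-1)n^2+O(n)$.

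For equality one needs $|\bar S|=0$ and $\sum_i g_{b,i}(\bm{x})=0$ simultaneously. The condition $|\bar S|=0$ is $x_i\neq x_{i+b}$ for all $i\in[1,n-b]$, and granting it I claim $\sum_i g_{b,i}(\bm{x})=0$ iff $x_i\neq x_{i+2b}$ for all $i\in[1,n-2b]$: if some $x_i=x_{i+2b}$ then $\bm{x}_{[i,i+2b]}$ is $2b$-periodic (its only defining constraint being $x_i=x_{i+2b}$) and, since every distance-$b$ pair differs, $i,i+b\in S$, so $g_{b,i}(\bm{x})\geq 1$; conversely, if $x_i\neq x_{i+2b}$ for all $i$ then no window $\bm{x}_{[i,j+b]}$ with $j\geq i+b$ can be $2b$-periodic, so each $g_{b,i}(\bm{x})=0$. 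Substituting $|\bar S|=0$ (which forces $f_{b,j}(\bm{x})=n-b-j$) and $\sum_i g_{b,i}(\bm{x})=0$ back in yields $\big|\mathcal{L}_1^b(\bm{x})\big|=R$, establishing the equality case. (For $q=2$ the two equality conditions are incompatible — $x_i\neq x_{i+b}$ forces $x_i=x_{i+2b}$ — which is why only the general quadratic bound is asserted there.)

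The main obstacle is precisely this coupling: one cannot optimize $r_b(\bm{x})$, the $f_{b,j}(\bm{x})$'s and the $g_{b,i}(\bm{x})$'s independently, since, for instance, a sequence with all distance-$b$ pairs distinct maximizes $r_b(\bm{x})$ but can push $\sum_i g_{b,i}(\bm{x})$ up to order $n^2$. The estimate $(n-b-j)-f_{b,j}(\bm{x})\leq 2|\bar S|$ is what breaks the deadlock: it charges the entire deficit in the $f$-terms to the non-breaks, and the large coefficient $q^{b-1}A$ multiplying the $r_b(\bm{x})$-term then absorbs it. The remaining work — verifying the summation identity for $R$, the clean inequality $q^{b-1}A>2\frac{q^{b-1}-1}{q-1}$ (where $q\geq 3$ enters), and the $2b$-periodicity criterion for $\sum_i g_{b,i}(\bm{x})=0$ — is routine.
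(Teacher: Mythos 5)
Your proposal is correct, and for the refined bound it rests on the same engine as the paper's proof, namely the estimate $\sum_{j=1}^{b-1}q^{b-j-1}(n-b-j)=\frac{(n-b)(q^{b-1}-1)}{q-1}-\frac{q^b-q-(b-1)(q-1)}{(q-1)^2}$ together with the fact that the coefficient $q^{b-1}\big((n-b+1)(q-1)-1\big)$ of the $r_b$-term dominates the total weight of the $f$-terms. The differences are in the bookkeeping. For the first inequality the paper does not invoke the explicit formula at all: it writes $\mathcal{L}_1^b(\bm{x})=\bigcup_{\bm{y}\in\mathcal{D}_1^b(\bm{x})}\mathcal{I}_1^b(\bm{y})$ and applies the union bound with $|\mathcal{D}_1^b(\bm{x})|=r_b(\bm{x})\le n-b+1$, whereas you deduce it from Theorem~\ref{thm:explicit} plus a one-line comparison; both are valid. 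For the refined bound the paper proves (Claim~\ref{cla:h}) that $h(\bm{x})=q^{b-1}A\,r_b(\bm{x})+2q^{b-1}-\sum_j q^{b-j-1}f_{b,j}(\bm{x})$ is strictly increasing in $r_b(\bm{x})$, because the \emph{entire} $f$-sum is smaller than a single increment $q^{b-1}A$ of the $r_b$-term; you instead charge the $f$-deficit linearly to $|\bar{S}|$ via $(n-b-j)-f_{b,j}(\bm{x})\le 2|\bar{S}|$ and absorb it into the loss $q^{b-1}A|\bar{S}|$. Your per-defect accounting is a touch sharper in that it shows the refined upper bound itself holds for $q=2$ as well (only the attainability of equality needs $q\ge 3$, since over a binary alphabet $x_i\ne x_{i+b}$ for all $i$ forces $x_i=x_{i+2b}$), a point the paper does not make. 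The treatment of the equality case (reduce to $r_b(\bm{x})=n-b+1$ and $\sum_i g_{b,i}(\bm{x})=0$, the latter equivalent to $x_i\ne x_{i+2b}$ once all distance-$b$ pairs differ) matches the paper's.
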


In the case of a binary alphabet, the problem becomes more complex because it is impossible to find a sequence $\bm{x} \in \Sigma_2^n$ where the elements $x_i, x_{i+1}, x_{i+2}$ are pairwise distinct for $i \in [1, n-2]$.
In \cite{Bar-Lev-22-IT-ball}, Bar-Lev et al. addressed the binary case by introducing the concept of an \(m\)-balanced sequence \(\bm{x}\), for which \(\alpha(\bm{x}) = m\) and the length of each alternating segment belongs to the set \(\big\{\lfloor n/m \rfloor, \lceil n/m \rceil\big\}\).
They demonstrated that
\[
\mathop{\arg\max}_{\substack{\bm{x} \in \Sigma_2^n \\ \alpha(\bm{x}) = m}} \big\{ \big|\mathcal{L}_1^1(\bm{x})\big| \big\}= \big\{\bm{x} \in \Sigma_2^n: \bm{x} \text{ is an \(m\)-balanced sequence}\big\}.
\]
Let
\[
M = \mathop{\arg\min}_{m \in [1,n]} \Big\{\Big|m - \frac{1}{2}\sqrt{1 + 2n}\Big|\Big\}.
\]
They further showed that the maximum fixed-length Levenshtein balls with unit radius are centered at the \(m\)-balanced sequences for \(m \in M\) and established that
\[
\mathop{\max}_{\substack{\bm{x} \in \Sigma_2^n}} \big\{ \big|\mathcal{L}_1^1(\bm{x})\big| \big\} = n^2 - \sqrt{2} n^{\frac{3}{2}} + O(n).
\]
Following a similar method as that used in \cite[Section IV.B]{Bar-Lev-22-IT-ball}, we can derive that
\[
\mathop{\max}_{\substack{\bm{x} \in \Sigma_2^n}} \big\{ \big|\mathcal{L}_1^b(\bm{x})\big| \big\} = 2^{b-1} n^2 - 2^{\frac{b}{2}} n^{\frac{3}{2}} + O(n).
\]
Since the discussion is more complicated and somewhat convoluted, and this result provides only a small improvement compared to our general upper bound \(2^{b-1} n^2 + O(n)\) established in Equation (\ref{eq:max}), we omit further details.

\subsection{Average Size}

For the special case where \(b = 1\), when using Equation (\ref{eq:b=1}), one needs to determine the expectations \(\mathop{\mathbb{E}}_{\bm{x} \in \Sigma_q^n}\left[ r(\bm{x}) \right]\) and \(\mathop{\mathbb{E}}_{\bm{x} \in \Sigma_q^n}\left[ \sum_{i=1}^{\alpha(\bm{x})} s_i(\bm{x}) \right]\). Computing \(\mathop{\mathbb{E}}_{\bm{x} \in \Sigma_q^n}\left[ \sum_{i=1}^{\alpha(\bm{x})} s_i(\bm{x}) \right]\) is a challenging task, see \cite[Section V]{Bar-Lev-22-IT-ball} for further details.
In contrast, when using Equation (\ref{E_1_1_equ}) for an arbitrary integer \(b\), it suffices to compute \(\mathop{\mathbb{E}}_{\bm{x} \in \Sigma_q^n}\left[ r_b(\bm{x}) \right]\), \(\mathop{\mathbb{E}}_{\bm{x} \in \Sigma_q^n}\left[ f_{b,j}(\bm{x}) \right]\), and \(\mathop{\mathbb{E}}_{\bm{x} \in \Sigma_q^n}\left[ g_{b,i}(\bm{x}) \right]\), which will be shown to be a relatively straightforward task in Section \ref{sec:average}.
After deriving \(\mathop{\mathbb{E}}_{\bm{x} \in \Sigma_q^n}\left[ r_b(\bm{x}) \right]\), \(\mathop{\mathbb{E}}_{\bm{x} \in \Sigma_q^n}\left[ f_{b,j}(\bm{x}) \right]\), and \(\mathop{\mathbb{E}}_{\bm{x} \in \Sigma_q^n}\left[ g_{b,i}(\bm{x}) \right]\), we can determine \(\mathop{\mathbb{E}}_{\bm{x} \in \Sigma_q^n}\left[\big|\mathcal{L}_1^b(\bm{x})\big|\right]\).

\begin{theorem}\label{thm:average}
Let $n\geq 2b+1$ with $b\geq 1$, it holds that
\begin{align*}
\mathop{\mathbb{E}}_{\bm{x}\in\Sigma_q^n}\big[\big|\mathcal{L}_1^b(\bm{x})\big|\big]
&=q^{b-2}\big(q+(q-1)(n-b)\big) \big((n-b+1)(q-1)-1\big) +2q^{b-1} \\
&\quad\quad\quad -\sum\limits_{j=1}^{b-1} \frac{q^b(q-1)^2(n-b-j)}{q^{j+3}}-\sum\limits_{i=1}^{n-2b} \sum_{j=i+b}^{n-b} \frac{k_{i,j}}{q^{j+b-i-1}}\\
&= q^{b-2}(q-1)^2n^2+O(n),
\end{align*}
where $k_{i,j}=(q-1)q^{2b-1}$ if $b\mid(j-i)$ and $k_{i,j}=(q-1)^2q^{2b-2}$ otherwise.
\end{theorem}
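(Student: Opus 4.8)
The plan is to compute $\mathop{\mathbb{E}}_{\bm{x}\in\Sigma_q^n}\big[\big|\mathcal{L}_1^b(\bm{x})\big|\big]$ by taking the expectation of each term in the explicit formula (\ref{E_1_1_equ}) and invoking linearity of expectation. So the work reduces to evaluating $\mathop{\mathbb{E}}[r_b(\bm{x})]$, $\mathop{\mathbb{E}}[f_{b,j}(\bm{x})]$ for each $j\in[1,b-1]$, and $\mathop{\mathbb{E}}[g_{b,i}(\bm{x})]$ for each $i\in[1,n-2b]$, where $\bm{x}$ is drawn uniformly from $\Sigma_q^n$ (equivalently, the entries $x_1,\dots,x_n$ are i.i.d.\ uniform on $\Sigma_q$).

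First I would handle $\mathop{\mathbb{E}}[r_b(\bm{x})]$. By Lemma~\ref{del_ball}, $r_b(\bm{x}) = 1 + \sum_{i=1}^{n-b}\mathbbm{1}_{x_i\neq x_{i+b}}$, and since $x_i,x_{i+b}$ are independent uniform symbols, $\Pr[x_i\neq x_{i+b}] = (q-1)/q$; hence $\mathop{\mathbb{E}}[r_b(\bm{x})] = 1 + (n-b)(q-1)/q = \big(q+(q-1)(n-b)\big)/q$. Next, for $f_{b,j}(\bm{x}) = \sum_{i=1}^{n-b-j}\mathbbm{1}_{x_i\neq x_{i+b}}\mathbbm{1}_{x_{i+j}\neq x_{i+j+b}}$ with $1\le j\le b-1$: here the four indices $i, i+b, i+j, i+j+b$ are pairwise distinct (using $0<j<b$), so the two indicator events are independent, each with probability $(q-1)/q$, giving $\mathop{\mathbb{E}}[f_{b,j}(\bm{x})] = (n-b-j)(q-1)^2/q^2$. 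Substituting into $\sum_{j=1}^{b-1} q^{b-j-1}\mathop{\mathbb{E}}[f_{b,j}(\bm{x})]$ matches the claimed $\sum_{j=1}^{b-1} q^b(q-1)^2(n-b-j)/q^{j+3}$ after bookkeeping $q^{b-j-1}/q^2 = q^{b-j-3} = q^b/q^{j+3}$.

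The main obstacle is $\mathop{\mathbb{E}}[g_{b,i}(\bm{x})]$, which carries the $2b$-periodicity condition. Writing $g_{b,i}(\bm{x}) = \sum_{j=i+b}^{n-b}\mathbbm{1}_{A_{i,j}}$ where $A_{i,j}$ is the event $\{x_i\neq x_{i+b}\}\cap\{x_j\neq x_{j+b}\}\cap\{\bm{x}_{[i,j+b]}\text{ is }2b\text{-periodic}\}$, I need $\Pr[A_{i,j}]$. The $2b$-periodicity of the substring $\bm{x}_{[i,j+b]}$ forces the entries in positions $[i,j+b]$ to be determined by the first $2b$ of them (when the substring has length $>2b$), so only $\min(2b, j+b-i+1)$ free symbols are involved; conditioned on $2b$-periodicity, the events $x_i\neq x_{i+b}$ and $x_j\neq x_{j+b}$ must be analyzed carefully because periodicity links $x_j$ to $x_{j-2b}$, etc. The cleanest route: condition on whether $b\mid(j-i)$. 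If $b\mid(j-i)$, then $j$ and $i$ lie in the same residue class mod $b$, so $2b$-periodicity ties $x_i, x_{i+b}, x_j, x_{j+b}$ into one periodic chain of period $2b$; the constraint $x_i\neq x_{i+b}$ then already propagates to $x_j\neq x_{j+b}$ (or forces it), and one counts the number of free symbols ($2b$ of them, with two of them constrained to be unequal) against the total number of symbols in $[i,j+b]$, yielding $\Pr[A_{i,j}] = (q-1)q^{2b-1} / q^{j+b-i+1}$, i.e.\ $k_{i,j} = (q-1)q^{2b-1}$ with $k_{i,j}$ the numerator over $q^{j+b-i-1}$ after extracting a $q^{-2}$. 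If $b\nmid(j-i)$, the two unequal-pair constraints are ``independent'' within the $2b$-periodic structure, contributing an extra factor $(q-1)/q$ and giving $k_{i,j} = (q-1)^2 q^{2b-2}$. I would verify these probabilities by an explicit small-case check ($b=2$, a few values of $j-i$) before trusting the general count. Assembling $\sum_{i=1}^{n-2b}\sum_{j=i+b}^{n-b} k_{i,j}/q^{j+b-i-1}$ and combining with the earlier pieces, the leading-order terms coalesce to $q^{b-2}(q-1)^2 n^2 + O(n)$, which I would confirm by noting $r_b$ contributes the dominant $\Theta(n^2)$ term through the product $\mathop{\mathbb{E}}[r_b]\cdot\big((n-b+1)(q-1)-1\big) \sim \tfrac{(q-1)n}{q}\cdot(q-1)n\cdot q^{b-1}$ and that the $f$- and $g$-sums are $O(n)$ and $O(n)$ respectively (the $g$-sum's double summation is geometrically damped in $j+b-i$, so it is only linear in $n$). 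The final equality in the theorem is then immediate from this leading-term extraction.
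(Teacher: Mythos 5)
Your proposal is correct and follows essentially the same route as the paper: linearity of expectation applied to the explicit formula of Theorem~\ref{thm:explicit}, with $\mathop{\mathbb{E}}[r_b]$ and $\mathop{\mathbb{E}}[f_{b,j}]$ obtained from independence of disjoint coordinate pairs and $\mathop{\mathbb{E}}[g_{b,i}]$ obtained by counting $2b$-periodic substrings with a case split on $b\mid(j-i)$, exactly as in Lemmas~\ref{lem:r}, \ref{lem:f}, and \ref{lem:g}. One remark: your computed probability $(q-1)q^{2b-1}/q^{j+b-i+1}$ agrees with Lemma~\ref{lem:g}, so your attempt to force it into the form $k_{i,j}/q^{j+b-i-1}$ by ``extracting a $q^{-2}$'' is not needed --- the denominator $q^{j+b-i-1}$ in the theorem statement appears to be a typo for $q^{j+b-i+1}$, and your value of $k_{i,j}$ is the intended one.
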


\subsection{Concentration}
Wang and Wang \cite{Wang-24-DCC-ball} established the concentration bound on fixed-length Levenshtein balls with radius one by Azuma's inequality.
Later, He and Ye \cite{He-23-ISIT-ball} provided a simple proof of the concentration bound presented by Wang and Wang.
Now we refine the approach of He and Ye \cite{He-23-ISIT-ball} to construct a code $\mathcal{C} \subseteq \Sigma_q^n$ containing $(1-\frac{3}{n})q^n$ sequences, such that the size of each fixed-length $b$-burst Levenshtein ball with radius one is close to the average size.
This yields the following concentration bound on fixed-length $b$-burst Levenshtein balls.

\begin{theorem}\label{thm:concentration}
  For $b\geq 1$ and sufficiently large $n$, there exists some constant $C$ such that for any $\bm{x}\in \Sigma_q^n$, it holds that
  \begin{align*}
    \mathbbm{P}\left( \Big| \big|\mathcal{L}_1^b(\bm{x})\big|- \mathop{\mathbb{E}}_{\bm{x} \in \Sigma_q^n}\big[\big|\mathcal{L}_1^b(\bm{x})\big|\big] \Big|> C\sqrt{n^3\log_q n}\right) \leq \frac{3}{n}.
  \end{align*}
\end{theorem}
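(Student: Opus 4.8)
The plan is to build on the explicit formula of Theorem~\ref{thm:explicit}, which writes $\big|\mathcal{L}_1^b(\bm{x})\big|$ as an affine combination, with coefficients independent of $\bm{x}$, of the three quantities $r_b(\bm{x})$, $f_{b,j}(\bm{x})$ for $j\in[1,b-1]$, and $G(\bm{x})\triangleq\sum_{i=1}^{n-2b}g_{b,i}(\bm{x})$: the coefficient of $r_b(\bm{x})$ is $q^{b-1}\big((n-b+1)(q-1)-1\big)=\Theta(n)$, each $f_{b,j}(\bm{x})$ carries a coefficient $q^{b-j-1}=\Theta(1)$, and $G(\bm{x})$ carries the coefficient $-1$. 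By linearity of expectation, $\big|\mathcal{L}_1^b(\bm{x})\big|-\mathbb{E}\big[\big|\mathcal{L}_1^b(\bm{x})\big|\big]$ is the same affine combination of the centered variables, so after the triangle inequality it suffices to bound each centered variable with probability at least $1-\tfrac{1}{n}$; collecting the three exceptional sets and taking their complement yields the announced code $\mathcal{C}\subseteq\Sigma_q^n$ with $|\mathcal{C}|\geq(1-\tfrac{3}{n})q^n$ on which $\big|\mathcal{L}_1^b(\bm{x})\big|$ stays within $C\sqrt{n^3\log_q n}$ of the mean.

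For $r_b$ and the $f_{b,j}$ I would apply the bounded-differences inequality (McDiarmid) to the independent uniform coordinates $x_1,\dots,x_n$ of $\bm{x}$. Since $r_b(\bm{x})=1+\sum_{i=1}^{n-b}\mathbbm{1}[x_i\neq x_{i+b}]$, changing one coordinate changes $r_b$ by at most $2$, and since $f_{b,j}(\bm{x})=\sum_{i}\mathbbm{1}[x_i\neq x_{i+b}]\,\mathbbm{1}[x_{i+j}\neq x_{i+j+b}]$, it has bounded differences at most $4$. Hence $\mathbbm{P}\big(|r_b-\mathbb{E}[r_b]|\geq t\big)\leq 2e^{-t^2/(2n)}$ and $\mathbbm{P}\big(|f_{b,j}-\mathbb{E}[f_{b,j}]|\geq t\big)\leq 2e^{-t^2/(8n)}$, so taking $t=\Theta(\sqrt{n\log_q n})$ makes the union of these $1+(b-1)$ events (recall $b$ is a fixed constant) have probability at most $\tfrac{2}{n}$. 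Multiplying by the coefficients, the $r_b$ term contributes a deviation of order at most $\Theta(n)\cdot\sqrt{n\log_q n}=\sqrt{n^3\log_q n}$ --- the target order, and the dominant source of fluctuation --- while the $f_{b,j}$ terms contribute only $O(\sqrt{n\log_q n})$.

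The main obstacle is $G(\bm{x})$: the condition ``$\bm{x}_{[i,j+b]}$ is $2b$-periodic'' couples coordinates that may be $\Theta(n)$ apart, so one coordinate flip can change $G$ drastically and McDiarmid does not apply directly. I would instead argue crudely. First, a window of length $L>2b$ is $2b$-periodic with probability exactly $q^{-(L-2b)}$; a union bound over the at most $n$ windows of each length shows that, after fixing the threshold at $C'\log_q n$ for a large enough constant $C'$ and taking $n$ large, with probability at least $1-\tfrac{1}{n}$ no pair $(i,j)$ with $j-i\geq C'\log_q n$ contributes to $G$, whence $0\leq g_{b,i}(\bm{x})\leq C'\log_q n$ for every $i$ and $0\leq G(\bm{x})\leq C'n\log_q n$. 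Second, the same geometric estimate gives $\mathbb{E}[g_{b,i}]\leq\sum_{m\geq1}q^{-m}=\tfrac{1}{q-1}$, so $\mathbb{E}[G]=O(n)$. Therefore, on this good event, $|G-\mathbb{E}[G]|\leq C'n\log_q n+O(n)=O(n\log_q n)=o\big(\sqrt{n^3\log_q n}\big)$, so the $G$ term is negligible against the $\Theta(\sqrt{n^3\log_q n})$ fluctuation coming from $r_b$.

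Intersecting the three good events, whose complement has probability at most $\tfrac{3}{n}$, the triangle inequality together with the three bounds above yields $\big|\,\big|\mathcal{L}_1^b(\bm{x})\big|-\mathbb{E}\big[\big|\mathcal{L}_1^b(\bm{x})\big|\big]\,\big|\leq C\sqrt{n^3\log_q n}$ for a suitable constant $C$, which is Theorem~\ref{thm:concentration}. The one point needing care is keeping the total exceptional probability at $\tfrac{3}{n}$ rather than something growing with $b$: this is why the $b-1$ events for the $f_{b,j}$ are merged into a single union bound, and why the threshold in the periodicity estimate is taken above $2\log_q n$.
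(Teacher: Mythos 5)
Your proposal is correct and follows essentially the same route as the paper: define a good event by (i) concentration of $r_b(\bm{x})$ around its mean and (ii) absence of $2b$-periodic substrings longer than $\Theta(\log_q n)$, show its complement has probability at most $\tfrac{3}{n}$, and observe that on the good event the $\Theta(n)$-weighted $r_b$ term dominates while $\sum_j q^{b-j-1}f_{b,j}$ and $\sum_i g_{b,i}$ contribute only $O(n)$ and $O(n\log_q n)$ respectively. The only (immaterial) differences are that the paper controls $r_b$ via its exact binomial law plus Hoeffding rather than McDiarmid, and it bounds the $f_{b,j}$ terms deterministically by $O(n)$ instead of spending probability budget on concentrating them.
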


\section{Proof of Theorem \ref{thm:explicit}}\label{sec:explicit}

Let \(\bm{x} \in \Sigma_q^n\) contain \(r_b(\bm{x})\) \(b\)-runs.
For \(i \in [1, r_b(\bm{x})]\), let \(p_i^s\) and \(p_i^e\) denote the starting and ending positions of the \(i\)-th \(b\)-run, respectively.
Additionally, define \(\bm{x}^i\) as the sequence obtained from \(\bm{x}\) by removing a substring of length \(b\) from its \(i\)-th \(b\)-run, and let \(X_i \triangleq \mathcal{I}_1^b(\bm{x}^i)\) denote the \(b\)-burst-insertion ball of \(\bm{x}^i\).
In the remainder of this section, the notations \(\bm{x}\), \(p_i^s\), \(p_i^e\), \(\bm{x}^i\), and \(X_i\) will correspond to those defined above.

By the definition of a \(b\)-run, we can conclude that
\begin{gather}
    p_i^e - p_i^s \geq b - 1 \quad \text{for } i \in [1, r_b(\bm{x})], \label{eq:position1} \\
    p_{i+1}^s = p_i^e - b + 2\quad \text{for } i \in [1, r_b(\bm{x})-1], \label{eq:position2}\\
    x_{p_i^s - 1} \neq x_{p_i^s - 1 + b} \quad \text{for } i \in [2, r_b(\bm{x})], \label{eq:del1}\\
    x_{p_i^e + 1} \neq x_{p_i^e + 1 - b}  \quad \text{for } i \in [1, r_b(\bm{x}) - 1]. \label{eq:del2}\\
    x_{t}= x_{t+b} \quad \text{for } t\not\in \{p_i^e+1-b: i\in [1,r_b(\bm{x})-1]\}. \label{eq:del3}
\end{gather}

The following conclusion can be directly derived from Lemmas \ref{lem:del_ball}, \ref{del_ball}, and \ref{lem:ins_ball}.

\begin{lemma}\label{lem:del}
    It holds that \(\mathcal{D}_1^b(\bm{x}) = \{\bm{x}^1, \bm{x}^2, \ldots, \bm{x}^{r_b(\bm{x})}\}\) and \(|\mathcal{L}_1^b(\bm{x})| = \big| \bigcup_{i=1}^{r_b(\bm{x})} X_i \big| \), where for any \(i \in [1, r_b(\bm{x})]\), we have \(\bm{x}^i = \bm{x}_{[n] \setminus [t, t + b - 1]}\) for each \(t \in [p_i^s, p_i^e - b + 1]\) and $|X_i|= q^{b-1}\big((n-b+1)(q-1)+1\big)$.
\end{lemma}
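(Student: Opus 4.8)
The plan is to obtain the three assertions of the lemma by directly concatenating Lemma~\ref{lem:del_ball}, Lemma~\ref{del_ball}, Lemma~\ref{lem:ins_ball}, and Observation~\ref{order}, treating in turn the set-description of $\mathcal{D}_1^b(\bm{x})$, the decomposition $\mathcal{L}_1^b(\bm{x})=\bigcup_i X_i$, and the cardinality of each $X_i$.

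First I would confirm that $\bm{x}^i$ is well defined, i.e. that deleting any length-$b$ window $[t,t+b-1]$ with $t\in[p_i^s,p_i^e-b+1]$ from the $i$-th $b$-run yields one and the same sequence. This is exactly Lemma~\ref{lem:del_ball} applied to $\mathcal{K}=[p_i^s,p_i^e]$: since $\bm{x}_{[p_i^s,p_i^e]}$ is a $b$-run, that lemma gives $\bm{x}_{[n]\setminus[t,t+b-1]}=\bm{x}_{[n]\setminus[t',t'+b-1]}$ for all $t,t'\in[p_i^s,p_i^e-b+1]$, so $\bm{x}^i:=\bm{x}_{[n]\setminus[t,t+b-1]}$ does not depend on $t$. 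Next, using \eqref{eq:position1}--\eqref{eq:position2} together with $p_1^s=1$ and $p_{r_b(\bm{x})}^e=n$ (which hold by maximality of $b$-runs), one checks that the intervals $[p_i^s,p_i^e-b+1]$, $i\in[1,r_b(\bm{x})]$, are nonempty, pairwise disjoint, consecutive, and cover $[1,n-b+1]$; hence every admissible length-$b$ deletion window lies in exactly one $b$-run, which gives $\mathcal{D}_1^b(\bm{x})=\{\bm{x}^1,\ldots,\bm{x}^{r_b(\bm{x})}\}$ as a set. That these $r_b(\bm{x})$ sequences are pairwise distinct — so the set has exactly $r_b(\bm{x})$ elements — is precisely the statement $\big|\mathcal{D}_1^b(\bm{x})\big|=r_b(\bm{x})$ of Lemma~\ref{del_ball}.

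For the decomposition of the Levenshtein ball, Observation~\ref{order} lets us realize every sequence of $\mathcal{L}_1^b(\bm{x})$ as a burst-deletion followed by a burst-insertion, so $\mathcal{L}_1^b(\bm{x})=\bigcup_{\bm{y}\in\mathcal{D}_1^b(\bm{x})}\mathcal{I}_1^b(\bm{y})=\bigcup_{i=1}^{r_b(\bm{x})}\mathcal{I}_1^b(\bm{x}^i)=\bigcup_{i=1}^{r_b(\bm{x})}X_i$. Finally, each $\bm{x}^i\in\Sigma_q^{n-b}$, so Lemma~\ref{lem:ins_ball} with its parameters instantiated at length $n-b$ and radius $1$ gives $\big|X_i\big|=q^{b-1}\sum_{j=0}^{1}\binom{n-b+1}{j}(q-1)^j=q^{b-1}\big((n-b+1)(q-1)+1\big)$, which is the claimed value.

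I do not anticipate a genuine obstacle: the proof is essentially a bookkeeping assembly of cited results. The only points requiring care are verifying that the length-$b$ deletion windows partition exactly among the $b$-runs in the manner encoded by \eqref{eq:position1}--\eqref{eq:position2}, and making sure that Lemma~\ref{lem:ins_ball} is applied with the shortened length $n-b$ rather than $n$ and with radius $t=1$.
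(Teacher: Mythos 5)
Your proposal is correct and matches the paper's intent: the paper itself dispatches this lemma with the single remark that it ``can be directly derived from Lemmas \ref{lem:del_ball}, \ref{del_ball}, and \ref{lem:ins_ball},'' and your write-up simply makes that derivation explicit (well-definedness of $\bm{x}^i$ via Lemma \ref{lem:del_ball}, the partition of deletion windows via Equations (\ref{eq:position1})--(\ref{eq:position2}), distinctness via Lemma \ref{del_ball}, and the count $|X_i|$ via Lemma \ref{lem:ins_ball} applied at length $n-b$ and radius $1$). No gaps.
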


To derive a closed-form expression for \(|\mathcal{L}_1^b(\bm{x})|\), we will require the following three claims, whose proofs are deferred to Subsections \ref{subsec:cla1}, \ref{subsec:cla2}, and \ref{subsec:cla3}, respectively, concerning the sets \(X_1, X_2, \ldots, X_{r_b(\bm{x})}\).

Before proceeding, we will define two types of sets.

\begin{definition}\label{def:X}
    For $1\leq i<j\leq r_b(\bm{x})$, we define $A_{i,j}$ and $B_{i,j}$ as follows:
    \begin{itemize}

        \item If $1\leq p_{j-1}^{e}- p_i^e+1\leq b$, let
        \begin{align*}
            A_{i,j}\triangleq \left\{\bm{x}_{[1,p_i^e-b]} \bm{x}_{[p_i^e-b+1,p_{j-1}^e-b+1]} \bm{v} \bm{x}_{[p_i^e+1,p_{j-1}^e+1]} \bm{x}_{[p_{j-1}^e+2,n]}:\bm{v}\in\Sigma_q^{b-(p_{j-1}^{e}- p_i^e+1)} \right\}\supseteq \{\bm{x}\}
        \end{align*}
        and
        \begin{align*}
            B_{i,j}\triangleq \left\{\bm{x}_{[1,p_i^e-b]} \bm{x}_{[p_i^e+1,p_{j-1}^e+1]} \bm{v} \bm{x}_{[p_i^e-b+1,p_{j-1}^e-b+1]} \bm{x}_{[p_{j-1}^e+2,n]}:\bm{v}\in\Sigma_q^{b-(p_{j-1}^{e}- p_i^e+1)} \right\}.
        \end{align*}

        \item If $p_{j-1}^{e}- p_i^e+1\geq b+1$, let
        \begin{align*}
            A_{i,j}\triangleq \left\{\bm{x}_{[1,p_i^e-b]}  \bm{x}_{[p_i^e-b+1,p_{i}^e]} \bm{x}_{[p_i^e+1,p_{j-1}^e+1]}\bm{x}_{[p_{j-1}^e+2,n]}\right\}= \{\bm{x}\}
        \end{align*}
        and
        \begin{align*}
            B_{i,j}\triangleq \left\{ \bm{x}_{[1,p_i^e-b]} \bm{x}_{[p_i^e+1,p_{i}^e+b]} \bm{x}_{[p_i^e-b+1,p_{j-1}^e-b+1]} \bm{x}_{[p_{j-1}^e+2,n]} \right\}
        \end{align*}
        if $\bm{x}_{[p_i^e+b+1,p_{j-1}^e+1]}= \bm{x}_{[p_i^e-b+1,p_{j-1}^e-2b+1]}$, i.e., $\bm{x}_{[p_i^e-b+1,p_{j-1}^e+1]}$ is $2b$-periodic, otherwise $B_{i,j}=\emptyset$.
    \end{itemize}
\end{definition}

\begin{claim}\label{cla1}
   For $1\leq i<j\leq r_b(\bm{x})$, we have $X_i\cap X_j= A_{i,j} \sqcup B_{i,j}$. In particular, $|X_i\cap X_{i+1}|=2q^{b-1}$.


\end{claim}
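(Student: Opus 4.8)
The plan is to compute $X_i \cap X_j = \mathcal{I}_1^b(\bm{x}^i) \cap \mathcal{I}_1^b(\bm{x}^j)$ by reducing to the structural description of intersections of two $b$-burst-insertion balls given in Lemma \ref{num_insertion}. First I would write $\bm{x}^i$ and $\bm{x}^j$ in the common form $\bm{u}\bm{v}\bm{w}$ and $\bm{u}\bm{v}'\bm{w}$, where $\bm{u}$, $\bm{w}$ are the maximal common prefix and suffix of $\bm{x}^i$ and $\bm{x}^j$, and $\bm{v}$, $\bm{v}'$ are the differing middle blocks (which by construction disagree in both their first and last symbols — this needs to be checked carefully using \eqref{eq:del1} and \eqref{eq:del2}, namely that the boundary symbols of the $i$-th and $(j-1)$-th $b$-runs behave as claimed). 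Concretely, deleting a length-$b$ block from the $i$-th $b$-run versus from the $j$-th $b$-run affects exactly the stretch of $\bm{x}$ between position $p_i^s$ and position $p_{j-1}^e+1$; using \eqref{eq:position2} one identifies $\bm{v} = \bm{x}_{[p_i^e-b+1,\,p_{j-1}^e+1]}$ removed on the left end in $\bm{x}^i$ and removed on the right end in $\bm{x}^j$ (or the appropriate short-block analogue), so that $|\bm{v}|-|\bm{v}'| $ matches the $d$ versus $b$ case split in Lemma \ref{num_insertion}. Setting $d = p_{j-1}^e - p_i^e + 1$ — note this is the length of $\bm{v}$ in $\bm{x}^i$ minus $b$, or the overlap length — I would match the two bullets of Lemma \ref{num_insertion} ($d \geq b+1$ versus $d \leq b$) to the two cases of Definition \ref{def:X}.

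The bulk of the work is the bookkeeping: in the case $d \leq b$ (i.e. $1 \leq p_{j-1}^e - p_i^e + 1 \leq b$), Lemma \ref{num_insertion} gives the intersection as $\{\bm{u}\bm{v}\bar{\bm{v}}\bm{v}'\bm{w},\ \bm{u}\bm{v}'\bar{\bm{v}}\bm{v}\bm{w}:\bar{\bm{v}}\in\Sigma_q^{b-d}\}$, and I would verify that substituting the explicit $\bm{u},\bm{v},\bm{v}',\bm{w}$ back in reproduces exactly $A_{i,j}$ and $B_{i,j}$ as written, with $A_{i,j}$ corresponding to one ordering and $B_{i,j}$ to the other (and the "$\supseteq\{\bm{x}\}$" assertion for $A_{i,j}$ follows by taking $\bar{\bm{v}}$ to be the actual omitted symbols of $\bm{x}$). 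In the case $d \geq b+1$, Lemma \ref{num_insertion} gives at most two candidates $\bm{u}\bm{v}'_{[1,b]}\bm{v}\bm{w}$ and $\bm{u}\bm{v}_{[1,b]}\bm{v}'\bm{w}$, each included iff a periodicity condition ($\bm{v}_{[1,d-b]} = \bm{v}'_{[b+1,d]}$ or $\bm{v}_{[b+1,d]} = \bm{v}'_{[1,d-b]}$) holds; one of these always holds automatically (yielding $A_{i,j}=\{\bm{x}\}$, since $\bm{x}$ is always a common point as it lies in both $\mathcal{I}_1^b(\bm{x}^i)$ and $\mathcal{I}_1^b(\bm{x}^j)$ — both are obtained from $\bm{x}$ by delete-then-insert of the same block), and the other translates precisely into "$\bm{x}_{[p_i^e-b+1,\,p_{j-1}^e+1]}$ is $2b$-periodic", giving $B_{i,j}$. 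That the union $A_{i,j}\cup B_{i,j}$ is disjoint should be checked by comparing the two sequences symbol-by-symbol on the block where $\bm{v},\bm{v}'$ differ, using that $\bm{v}$ and $\bm{v}'$ differ in both endpoints.

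Finally, the special case $|X_i \cap X_{i+1}| = 2q^{b-1}$: here $j = i+1$, so $j-1 = i$ and the overlap length is $d = p_i^e - p_i^e + 1 = 1 \leq b$, putting us in the first case of Definition \ref{def:X} with $\bar{\bm{v}}$ ranging over $\Sigma_q^{b-1}$. Then $|A_{i,i+1}| = q^{b-1}$ and $|B_{i,i+1}| = q^{b-1}$, and since the union is disjoint we get $|X_i \cap X_{i+1}| = 2q^{b-1}$; I would also note $A_{i,i+1}$ and $B_{i,i+1}$ are genuinely disjoint because $\bm{v} \neq \bm{v}'$ (they are single differing symbols $x_{p_i^e-b+1}$ versus $x_{p_i^e+1}$, unequal by \eqref{eq:del2}), so swapping their positions changes the sequence.

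The main obstacle I anticipate is the first step — correctly extracting $\bm{u}, \bm{v}, \bm{v}', \bm{w}$ from $\bm{x}^i, \bm{x}^j$ and confirming that $\bm{v}, \bm{v}'$ really differ in both their first and last symbols, which is exactly the hypothesis Lemma \ref{num_insertion} requires. This is where \eqref{eq:del1}, \eqref{eq:del2}, and the maximality of $b$-runs all get used, and an off-by-$b$ error in the index ranges would propagate through everything; I would be especially careful about whether the maximal common suffix $\bm{w}$ really starts at position $p_{j-1}^e + 2$ and not earlier, which requires checking that $x_{p_{j-1}^e+1-b} \neq x_{p_{j-1}^e+1}$, i.e. \eqref{eq:del2} applied at the $(j-1)$-th run.
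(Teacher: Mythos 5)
Your proposal is correct and follows essentially the same route as the paper: write $\bm{x}^i=\bm{u}\bm{v}\bm{w}$ and $\bm{x}^j=\bm{u}\bm{v}'\bm{w}$ with $\bm{u}=\bm{x}_{[1,p_i^e-b]}$, $\bm{v}=\bm{x}_{[p_i^e+1,p_{j-1}^e+1]}$, $\bm{v}'=\bm{x}_{[p_i^e-b+1,p_{j-1}^e-b+1]}$, $\bm{w}=\bm{x}_{[p_{j-1}^e+2,n]}$, check via Equation~(\ref{eq:del2}) that $\bm{v},\bm{v}'$ differ in their first and last symbols, and invoke Lemma~\ref{num_insertion}; the case split $d=p_{j-1}^e-p_i^e+1\le b$ versus $d\ge b+1$ then matches Definition~\ref{def:X} exactly as you describe. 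The paper's own proof is just this decomposition plus ``the conclusion follows by Lemma~\ref{num_insertion},'' so your additional bookkeeping (identifying $A_{i,j}$ with the always-present candidate containing $\bm{x}$, translating the surviving periodicity condition into $2b$-periodicity for $B_{i,j}$, and the disjointness/cardinality count for $j=i+1$) only makes explicit what the paper leaves implicit.
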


\begin{claim}\label{cla2}
For $1\leq i<j<k\leq r_b(\bm{x})$, the following holds:
\begin{enumerate}
    \item $A_{i,j}\cap B_{i,k}=A_{i,k}\cap B_{i,j}=\emptyset$;

    \item $A_{i,k}\subseteq A_{i,j}$ and $A_{i,k}\subseteq A_{j,k}$;

    \item $B_{i,k}\cap B_{i,j}= \emptyset$ and \(B_{i,k}\cap B_{j,k} = \emptyset\);


    \item $(X_i \cap X_k)\setminus (X_i\cap X_j)= (X_i \cap X_k)\setminus (X_j\cap X_k)= B_{i,k}$.
\end{enumerate}
\end{claim}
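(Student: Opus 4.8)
The plan is to prove all four assertions of Claim~\ref{cla2} by carefully tracking, for each pair $(i,j)$, which "shape" the sets $A_{i,j}$ and $B_{i,j}$ take according to the dichotomy $p_{j-1}^e-p_i^e+1\le b$ versus $\ge b+1$ in Definition~\ref{def:X}. The key observation is that all four sets in play --- $A_{i,j}, A_{i,k}, A_{j,k}, B_{i,j}, B_{i,k}, B_{j,k}$ --- are, by construction, concrete families of sequences of length $n$ obtained from $\bm{x}$ by moving around two blocks of length $b$ (the $b$-substrings deleted from two of the $b$-runs) and possibly inserting a short free word $\bm{v}$ in between. So the entire claim reduces to comparing explicit strings position-by-position, and the bulk of the argument is bookkeeping with the run-boundary relations \eqref{eq:position1}--\eqref{eq:del3}.

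The steps, in order, would be: (1) First fix notation for the three relevant "gaps" $p_{j-1}^e-p_i^e$, $p_{k-1}^e-p_j^e$, and $p_{k-1}^e-p_i^e$, noting the additivity $p_{k-1}^e-p_i^e+1 = (p_{j-1}^e-p_i^e+1) + (p_{k-1}^e-p_j^e+1) - 1 \ge b+1$ whenever $j<k$ forces at least one intermediate $b$-run boundary, so in particular $A_{i,k}$ is always the singleton $\{\bm{x}\}$ case (the second bullet of Definition~\ref{def:X}). (2) For part~(1) of the claim, observe $A_{i,j}\ni\bm{x}$ in both cases and $A_{i,k}=\{\bm{x}\}$, so it suffices to show $\bm{x}\notin B_{i,j}$ and $\bm{x}\notin B_{i,k}$; this follows because an element of $B_{i,j}$ has the block $\bm{x}_{[p_i^e+1,p_{j-1}^e+1]}$ moved strictly to the left of $\bm{x}_{[p_i^e-b+1,\dots]}$, and using \eqref{eq:del1}--\eqref{eq:del2} (the boundary symbols at the ends of a $b$-run genuinely change) one checks this rearranged string differs from $\bm{x}$. (3) For part~(2), since $A_{i,k}=\{\bm{x}\}$ and $\bm{x}\in A_{i,j}$ and $\bm{x}\in A_{j,k}$ always, containment is immediate. (4) For part~(3), compare the two explicit descriptions of $B_{i,k}$ and $B_{i,j}$: they place the moved block at positions determined by $p_{j-1}^e$ versus $p_{k-1}^e$, which differ, so a position-by-position comparison (again invoking that the relevant run-boundary symbols differ) shows the sets are disjoint; symmetrically for $B_{i,k}\cap B_{j,k}$. (5) Part~(4) is then a formal consequence of Claim~\ref{cla1} together with parts~(1)--(3): write $X_i\cap X_k = A_{i,k}\sqcup B_{i,k}$ and $X_i\cap X_j = A_{i,j}\sqcup B_{i,j}$; since $A_{i,k}\subseteq A_{i,j}$ by part~(2), $A_{i,k}\cap B_{i,j}=\emptyset$ by part~(1), and $B_{i,k}\cap B_{i,j}=\emptyset$ by part~(3), while $B_{i,k}\cap A_{i,j}=\emptyset$ also follows from part~(1) (since $A_{i,j}\supseteq A_{i,k}$ and ... actually one needs $B_{i,k}\cap A_{i,j}=\emptyset$ directly, which I would get from $A_{i,j}\ni\bm{x}$ plus the $2b$-periodicity structure of $B_{i,k}$), we get $(X_i\cap X_k)\setminus(X_i\cap X_j)=B_{i,k}$; the other equality is symmetric.

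The main obstacle I anticipate is part~(4), and specifically the cross-term $B_{i,k}\cap A_{i,j}$: unlike the other disjointness statements, $A_{i,j}$ in the case $p_{j-1}^e-p_i^e+1\le b$ is a whole family $\{\,\cdots\bm{v}\cdots : \bm{v}\in\Sigma_q^{b-(p_{j-1}^e-p_i^e+1)}\}$, so I cannot just exhibit a single differing coordinate --- I must argue that \emph{no} choice of the free word $\bm{v}$ produces an element of $B_{i,k}$. This requires comparing the fixed (non-$\bm{v}$) coordinates of the two families and showing they already conflict, using the $2b$-periodicity hypothesis defining $B_{i,k}$ against relations \eqref{eq:del1} and \eqref{eq:del2} which guarantee $\bm{x}$ is \emph{not} $2b$-periodic across the relevant window (a $b$-run boundary sits strictly inside). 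A secondary subtlety is handling the corner case where one of the gaps $p_{j-1}^e-p_i^e+1$ or $p_{k-1}^e-p_j^e+1$ equals exactly $b$ or $b+1$, since then the shape of $A_{i,j}$ or $B_{i,j}$ switches; I would dispose of these by a short separate check, or better, by noting the defining formulas in Definition~\ref{def:X} agree at the boundary $p_{j-1}^e-p_i^e+1=b+1$ (the free word $\bm{v}$ has length $0$) so no genuine discontinuity arises. Everything else is routine once the string descriptions are written out explicitly and the boundary relations \eqref{eq:position1}--\eqref{eq:del3} are applied.
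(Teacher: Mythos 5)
Your overall strategy (explicit string comparison via the run-boundary relations, then deducing part~(4) formally from parts~(1)--(3) and Claim~\ref{cla1}) matches the paper's, and your treatment of parts~(3) and~(4) is essentially right. However, your step~(1) contains a genuine error that propagates into parts~(1) and~(2). The ``additivity'' identity $p_{k-1}^e-p_i^e+1 = (p_{j-1}^e-p_i^e+1)+(p_{k-1}^e-p_j^e+1)-1$ is false (the left side equals the right side minus $(p_j^e-p_{j-1}^e)\ge 1$), and more importantly the conclusion $p_{k-1}^e-p_i^e+1\ge b+1$ does not hold in general: consecutive $b$-run endpoints satisfy only $p_{\ell+1}^e-p_\ell^e\ge 1$, so for $i<j<k$ one gets merely $p_{k-1}^e-p_i^e+1\ge 2$. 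For instance, with $b=5$ and a binary $\bm{x}$ having $x_t\neq x_{t+5}$ exactly for $t=1,2$, one has $p_2^e-p_1^e+1=2\le b$, so $A_{1,3}$ is a $q^{3}$-element family, not the singleton $\{\bm{x}\}$. Consequently your proof of part~(2) (``$A_{i,k}=\{\bm{x}\}$, hence containment is immediate'') collapses in this regime, and your proof of part~(1) is incomplete because showing $\bm{x}\notin B_{i,j}$ does not rule out the other $q^{b-(p_{k-1}^e-p_i^e+1)}-1$ members of $A_{i,k}$. The paper handles this by treating the case $2\le p_{j-1}^e-p_i^e+1<p_{k-1}^e-p_i^e+1\le b$ separately, rewriting both families as $\bigl\{\bm{x}_{[1,p_{j-1}^e-b+1]}\bm{v}\bm{x}_{[p_i^e+1,n]}\bigr\}$ and $\bigl\{\bm{x}_{[1,p_{k-1}^e-b+1]}\bm{v}\bm{x}_{[p_i^e+1,n]}\bigr\}$ and observing the containment directly; and it settles the $A$-versus-$B$ disjointness in part~(1) by comparing a single coordinate lying in the \emph{fixed} (non-$\bm{v}$) part of every member of both families, using Equation~(\ref{eq:del2}) --- which is also the correct resolution of the ``main obstacle'' you flag at the end, so that concern is sound but the singleton shortcut you lean on elsewhere is not available.

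A smaller remark: in part~(3) the paper's disjointness argument compares the $(n-p_{k-1}^e)$-th entry from the end ($x_{p_{k-1}^e+1}$ in $B_{i,j}$ versus $x_{p_{k-1}^e-b+1}$ in $B_{i,k}$, distinct by Equation~(\ref{eq:del2})), which is the precise form of the ``position-by-position comparison'' you gesture at; making that single coordinate explicit is what turns your sketch into a proof.
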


\begin{claim}\label{cla3}
It holds that
\begin{align*}
|\mathcal{L}_1^b(\bm{x})|
&=\sum\limits_{i=1}^{r_b(\bm{x})}|X_i|-\sum\limits_{i=1}^{r_b(\bm{x})-1}|X_i\cap X_{i+1}|-\sum\limits_{i=1}^{r_b(\bm{x})-2}\sum\limits_{j=i+2}^{r_b(\bm{x})}|(X_i\cap X_j)\setminus(X_i\cap X_{i+1})| \\
&=\sum\limits_{i=1}^{r_b(\bm{x})}|X_i|-\sum\limits_{i=1}^{r_b(\bm{x})-1}|X_i\cap X_{i+1}|-\sum\limits_{i=1}^{r_b(\bm{x})}\sum\limits_{j=i+2}^{r_b(\bm{x})}|(X_i\cap X_j)\setminus(X_i\cap X_{i+1})|.
\end{align*}
\end{claim}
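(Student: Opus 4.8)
The plan is to prove the first identity by the standard ``peeling'' decomposition of a union,
\begin{align*}
|\mathcal{L}_1^b(\bm{x})| = \Big|\bigcup_{k=1}^{r_b(\bm{x})} X_k\Big| = \sum_{k=1}^{r_b(\bm{x})} \Big|X_k \setminus \bigcup_{i=1}^{k-1} X_i\Big|,
\end{align*}
where the $k=1$ term contributes $|X_1|$. Once the first identity is established, the second is immediate: enlarging the outer index range of the final double sum from $i\in[1,r_b(\bm{x})-2]$ to $i\in[1,r_b(\bm{x})]$ only appends inner sums over the empty ranges $[r_b(\bm{x})+1,r_b(\bm{x})]$ and $[r_b(\bm{x})+2,r_b(\bm{x})]$, which vanish.

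For $k\ge 2$ I would write $\big|X_k\setminus\bigcup_{i<k}X_i\big| = |X_k| - \big|\bigcup_{i=1}^{k-1}(X_i\cap X_k)\big|$, so that the whole argument reduces to understanding, for fixed $k$, the union of the intersections $X_i\cap X_k = A_{i,k}\sqcup B_{i,k}$ (Claim~\ref{cla1}) as $i$ ranges over $[1,k-1]$. The structural heart of the proof is the disjoint decomposition
\begin{align*}
\bigcup_{i=1}^{k-1}(X_i\cap X_k) \;=\; (X_{k-1}\cap X_k)\ \sqcup\ \bigsqcup_{i=1}^{k-2} B_{i,k}.
\end{align*}
To get the set equality, I would use the chain $A_{1,k}\subseteq A_{2,k}\subseteq\cdots\subseteq A_{k-1,k}$ supplied by Claim~\ref{cla2}(2) to collapse $\bigcup_{i\le k-1} A_{i,k}$ down to $A_{k-1,k}$, and then absorb $B_{k-1,k}$ through $A_{k-1,k}\sqcup B_{k-1,k}=X_{k-1}\cap X_k$. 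For the disjointness, I would invoke Claim~\ref{cla2}(4): for $i\le k-2$ we have $(X_i\cap X_k)\setminus(X_{k-1}\cap X_k)=B_{i,k}$ and $B_{i,k}\subseteq X_k$, so $B_{i,k}\cap X_{k-1}=\emptyset$ and hence $B_{i,k}$ is disjoint from $X_{k-1}\cap X_k$; pairwise disjointness among the $B_{i,k}$ themselves is Claim~\ref{cla2}(3).

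From this decomposition, $\big|X_k\setminus\bigcup_{i<k}X_i\big| = |X_k| - |X_{k-1}\cap X_k| - \sum_{i=1}^{k-2}|B_{i,k}|$ for $k\ge 2$, and substituting $|B_{i,k}| = \big|(X_i\cap X_k)\setminus(X_i\cap X_{i+1})\big|$ (the $j=i+1$ instance of Claim~\ref{cla2}(4)) and then swapping the order of summation in the resulting double sum produces exactly the claimed formula. I expect the main obstacle to be precisely the disjointness half of the structural step --- in particular, pinning down that $X_{k-1}\cap X_k$ meets none of the earlier $B_{i,k}$'s --- so I would be careful to record which instances of Claim~\ref{cla2}(3) and (4) are being used there; everything else is routine reindexing and bookkeeping.
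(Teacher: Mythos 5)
Your proof is correct and is essentially the paper's own argument: the paper proves the identity by induction on $r_b(\bm{x})$, peeling off $X_{m+1}$ and writing $\big|\bigcup_{i\le m}(X_i\cap X_{m+1})\big| = |X_m\cap X_{m+1}| + \sum_{i=1}^{m-1}|B_{i,m+1}|$ via statements (3) and (4) of Claim~\ref{cla2}, which is exactly the $k$-th term of your telescoping sum. Your extra use of the chain $A_{1,k}\subseteq\cdots\subseteq A_{k-1,k}$ from Claim~\ref{cla2}(2) is valid but not needed, since the generic set identity $\big|\bigcup_{i\le m}Y_i\big| = |Y_m| + \big|\bigcup_{i<m}(Y_i\setminus Y_m)\big|$ already yields the same decomposition.
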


Now, using the above claims, we can calculate
\begin{align*}
\left|\mathcal{L}_1^b(\bm{x})\right|
&=\sum\limits_{i=1}^{r_b(\bm{x})}|X_i|-\sum\limits_{i=1}^{r_b(\bm{x})-1}|X_i\cap X_{i+1}|-\sum\limits_{i=1}^{r_b(\bm{x})}\sum\limits_{j=i+2}^{r_b(\bm{x})}|(X_i\cap X_j)\setminus(X_i\cap X_{i+1})|\\
&=r_b(\bm{x}) \cdot q^{b-1}\big((n-b+1)(q-1)+1\big)- \big( r_b(\bm{x})-1 \big)\cdot 2q^{b-1}- \sum\limits_{i=1}^{r_b(\bm{x})}\sum\limits_{j=i+2}^{r_b(\bm{x})} |B_{i,j}|.
\end{align*}
Since \( p_{i+1}^e - p_i^e \geq 1 \), it follows that \( p_{j-1}^{e} - p_i^e + 1 \geq 2 \) for \( j \geq i + 2 \). Recall by Definition \ref{def:X} that:
\begin{itemize}
    \item If \( 2 \leq p_{j-1}^{e} - p_i^e + 1 \leq b \), then \(|B_{i,j}| = q^{b - (p_{j-1}^{e} - p_i^e + 1)}\);
    \item If \( p_{j-1}^{e} - p_i^e + 1 \geq b + 1 \), then \( |B_{i,j}| = 1 \) if \( \bm{x}_{[p_i^e - b + 1, p_{j-1}^e + 1]} \) is \( 2b \)-periodic, otherwise \( |B_{i,j}| = 0 \).
\end{itemize}
To this end, we can divide the task of computing \(\sum\limits_{i=1}^{r_b(\bm{x})}\sum\limits_{j=i + 2}^{r_b(\bm{x})} |B_{i,j}|\) into two types of subtasks based on the value of \(p_{j-1}^{e} - p_i^e + 1 \).

For fixed integer $d\in [2,b]$, we can compute
\begin{align*}
  &\quad \sum\limits_{i=1}^{r_b(\bm{x})}\sum\limits_{j=i + 2}^{r_b(\bm{x})} |B_{i,j}| \cdot \mathbbm{1}_{p_{j-1}^{e} - p_i^e + 1=d}\\
  &= q^{b-d} \cdot \sum\limits_{i=1}^{r_b(\bm{x})}\sum\limits_{j=i + 2}^{r_b(\bm{x})} \mathbbm{1}_{p_{j-1}^{e} - p_i^e + 1=d}\\
  &= q^{b-d} \cdot \sum\limits_{i=1}^{r_b(\bm{x})} \sum\limits_{i'=1}^{n} \mathbbm{1}_{i'= p_i^e-b+1} \sum\limits_{j=i + 2}^{r_b(\bm{x})} \sum\limits_{j'=1}^{n} \mathbbm{1}_{j'= p_{j-1}^e-b+1} \cdot \mathbbm{1}_{p_{j-1}^{e} - p_i^e + 1=d}\\
  &= q^{b-d} \cdot \sum\limits_{i=1}^{r_b(\bm{x})} \sum\limits_{i'=1}^{n} \mathbbm{1}_{i'= p_i^e-b+1} \sum\limits_{j=i + 2}^{r_b(\bm{x})} \sum\limits_{j'=i'+1}^{n} \mathbbm{1}_{j'= p_{j-1}^e-b+1} \cdot \mathbbm{1}_{j'-i' + 1=d}\\
  &\stackrel{(\ast)}{=} q^{b-d} \cdot \sum\limits_{i'=1}^{n} \mathbbm{1}_{x_{i'}\neq x_{i'+b}} \sum\limits_{j'=i'+ 1}^{n} \mathbbm{1}_{x_{j'}\neq x_{j'+b}} \cdot \mathbbm{1}_{j' - i' + 1=d}\\
  &= q^{b-d} \cdot \sum\limits_{i'=1}^{n} \mathbbm{1}_{x_{i'}\neq x_{i'+b}} \cdot \mathbbm{1}_{x_{i'+d-1}\neq x_{i'+b+d-1}}\\
  &=q^{b-d} \cdot \sum\limits_{i'=1}^{n-b-d+1} \mathbbm{1}_{x_{i'}\neq x_{i'+b}} \cdot \mathbbm{1}_{x_{i'+d-1}\neq x_{i'+b+d-1}}\\
  &= q^{b-d} \cdot f_{b,d-1}(\bm{x}),
\end{align*}
where $(\ast)$ holds by Equations (\ref{eq:del2}) and (\ref{eq:del3}).
Similarly, we can compute
\begin{align*}
  &\quad\sum\limits_{i=1}^{r_b(\bm{x})}\sum\limits_{j=i + 2}^{r_b(\bm{x})} |B_{i,j}| \cdot \mathbbm{1}_{p_{j-1}^{e} - p_i^e + 1\geq b+1}\\
  &=\sum\limits_{i=1}^{r_b(\bm{x})}\sum\limits_{j=i + 2}^{r_b(\bm{x})} \mathbbm{1}_{p_{j-1}^{e} - p_i^e + 1\geq b+1} \cdot \mathbbm{1}_{\bm{x}_{[p_i^e - b + 1, p_{j-1}^e + 1]} \text{ is } 2b\text{-periodic}}\\
  &=\sum\limits_{i=1}^{r_b(\bm{x})} \sum\limits_{i'=1}^{n} \mathbbm{1}_{i'= p_i^e-b+1} \sum\limits_{j=i + 2}^{r_b(\bm{x})} \sum\limits_{j'=i'+1}^{n} \mathbbm{1}_{j'= p_{j-1}^e-b+1} \cdot \mathbbm{1}_{p_{j-1}^{e} - p_i^e + 1\geq b+1} \cdot \mathbbm{1}_{\bm{x}_{[p_i^e - b + 1, p_{j-1}^e + 1]} \text{ is } 2b\text{-periodic}}\\
  &=\sum\limits_{i=1}^{r_b(\bm{x})} \sum\limits_{i'=1}^{n} \mathbbm{1}_{i'= p_i^e-b+1} \sum\limits_{j=i + 2}^{r_b(\bm{x})} \sum\limits_{j'=i'+1}^{n} \mathbbm{1}_{j'= p_{j-1}^e-b+1} \cdot \mathbbm{1}_{j'-i' + 1\geq b+1} \cdot \mathbbm{1}_{\bm{x}_{[i',j'+b]} \text{ is } 2b\text{-periodic}}\\
  &\stackrel{(\ast)}{=} \sum\limits_{i'=1}^{n} \mathbbm{1}_{x_{i'}\neq x_{i'+b}} \sum\limits_{j'=i'+ 1}^{n} \mathbbm{1}_{x_{j'}\neq x_{j'+b}} \cdot \mathbbm{1}_{j'-i' + 1\geq b+1} \cdot \mathbbm{1}_{\bm{x}_{[i',j'+b]} \text{ is } 2b\text{-periodic}}\\
  &=\sum\limits_{i'=1}^{n-2b} \sum\limits_{j'=i'+ b}^{n-b} \mathbbm{1}_{x_{i'}\neq x_{i'+b}} \cdot \mathbbm{1}_{x_{j'}\neq x_{j'+b}} \cdot \mathbbm{1}_{\bm{x}_{[i',j'+b]} \text{ is } 2b\text{-periodic}}\\
  &= \sum\limits_{i'=1}^{n-2b} g_{b,i'}(\bm{x}),
\end{align*}
where $(\ast)$ holds by Equations (\ref{eq:del2}) and (\ref{eq:del3}).
Combining these with the fact that
\begin{align*}
  \sum\limits_{i=1}^{r_b(\bm{x})}\sum\limits_{j=i + 2}^{r_b(\bm{x})} |B_{i,j}|
  &= \sum_{d=2}^b \sum\limits_{i=1}^{r_b(\bm{x})}\sum\limits_{j=i + 2}^{r_b(\bm{x})} |B_{i,j}| \cdot \mathbbm{1}_{p_{j-1}^{e} - p_i^e + 1=d}+ \sum\limits_{i=1}^{r_b(\bm{x})}\sum\limits_{j=i + 2}^{r_b(\bm{x})} |B_{i,j}| \cdot \mathbbm{1}_{p_{j-1}^{e} - p_i^e + 1\geq b+1},
\end{align*}
the proof of Theorem \ref{thm:explicit} is completed.

\subsection{Proof of Claim \ref{cla1}}\label{subsec:cla1}

By Lemma \ref{lem:del} and Equation (\ref{eq:position2}), for $1\leq i<j\leq r_b(\bm{x})$, we have
    \begin{gather*}
        \bm{x}^i= \bm{x}_{[n]\setminus [p_i^e-b+1,p_i^e]}= \bm{x}_{[1,p_i^e-b]} \bm{x}_{[p_i^e+1,n]}= \bm{x}_{[1,p_i^e-b]} \bm{x}_{[p_i^e+1,p_{j-1}^e+1]} \bm{x}_{[p_{j-1}^e+2,n]},\\
        \bm{x}^j= \bm{x}_{[n]\setminus [p_{j}^s,p_{j}^s+b-1]}= \bm{x}_{[1,p_{j}^s-1]} \bm{x}_{[p_{j}^s+b,n]}= \bm{x}_{[1,p_{j-1}^e-b+1]} \bm{x}_{[p_{j-1}^e+2,n]}= \bm{x}_{[1,p_i^e-b]} \bm{x}_{[p_i^e-b+1,p_{j-1}^e-b+1]} \bm{x}_{[p_{j-1}^e+2,n]} .
    \end{gather*}
    By Equation (\ref{eq:del2}), we get $x_{p_i^e+1}\neq x_{p_i^e-b+1}$ and $x_{p_{j-1}^e+1}\neq x_{p_{j-1}^e-b+1}$.
    Then the conclusion follows by Lemma \ref{num_insertion}.

\subsection{Proof of Claim \ref{cla2}}\label{subsec:cla2}

For the first statement, since by Equation (\ref{eq:del2}) that $x_{p_i^e + 1} \neq x_{p_i^e + 1 - b}$, it can be inferred from Claim \ref{cla1} that $A_{i,j}\cap B_{i,k}=A_{i,k}\cap B_{i,j}=\emptyset$.

For the second statement, if $A_{i,k}=\{\boldsymbol{x}\}$, it is clear that $A_{i,k}\subseteq A_{i,j}$.
If $A_{i,k}\neq \{\boldsymbol{x}\}$, we have $2\leq p_{j-1}^{e}- p_i^e+1< p_{k-1}^{e}- p_i^e+1\leq b$.
Then we get
\begin{gather*}
    A_{i,j}= \left\{\bm{x}_{[1,p_{j-1}^e-b+1]} \bm{v} \bm{x}_{[p_i^e+1,n]}:\bm{v}\in\Sigma_q^{b-(p_{j-1}^{e}- p_i^e+1)} \right\},\\
    A_{i,k}= \left\{\bm{x}_{[1,p_{k-1}^e-b+1]} \bm{v} \bm{x}_{[p_i^e+1,n]}:\bm{v}\in\Sigma_q^{b-(p_{k-1}^{e}- p_i^e+1)} \right\}.
\end{gather*}
In this case, we also have $A_{i,k}\subseteq A_{i,j}$.
Similarly, by considering the reversal of $\bm{x}$, we can derive $A_{i,k}\subseteq A_{j,k}$.

For the third statement, if \(B_{i,j} = \emptyset\) or \(B_{i,k} = \emptyset\), it is clear that \(B_{i,k} \cap B_{i,j} = \emptyset\).
If \(B_{i,j} \neq \emptyset\) and \(B_{i,k} \neq \emptyset\), we note that the \((n - p_{k-1}^e)\)-th entry from the end of each sequence in \(B_{i,j}\) is \(x_{p_{k-1}^e + 1}\), while for \(B_{i,k}\), it is \(x_{p_{k-1}^e - b + 1}\). By Equation (\ref{eq:del2}), it follows that \(B_{i,k}\cap B_{i,j} = \emptyset\).
Similarly, by considering the reversal of $\bm{x}$, we can derive \(B_{i,k}\cap B_{j,k} = \emptyset\).

For the last statement, we can calculate
\begin{align*}
    (X_{i}\cap X_k)\setminus (X_i\cap X_j)
    &= (A_{i,k}\sqcup B_{i,k})\setminus (A_{i,j}\sqcup B_{i,j})\\
    &= \big(A_{i,k}\setminus (A_{i,j}\sqcup B_{i,j}) \big) \sqcup  \big(B_{i,k}\setminus (A_{i,j}\sqcup B_{i,j}) \big)\\
    &= B_{i,k},
\end{align*}
where the last equality follows by the first three statements.
Similarly, by considering the reversal of $\bm{x}$, we can derive $(X_{i}\cap X_k)\setminus (X_j\cap X_k)= B_{i,k}$.
This completes the proof.

\subsection{Proof of Claim \ref{cla3}}\label{subsec:cla3}

We prove it by induction on $r_b(\bm{x})$.
For the base case where $r_b(\bm{x})=1$, the conclusion is trivial. Now, assuming that the conclusion is valid for $r_b(\bm{x})\leq m$ with $m\geq 1$, we examine the scenario where $r_b(\bm{x})=m+1$ and calculate
\begin{align*}
\left|\bigcup\limits_{i=1}^{m+1}X_i\right|
&\stackrel{(\ast)}{=}\left|\bigcup\limits_{i=1}^{m}X_i \right|+|X_{m+1}|-\left|\bigcup\limits_{i=1}^{m} (X_i\cap X_{m+1}) \right|\\
&= \left|\bigcup\limits_{i=1}^{m}X_i \right|+|X_{m+1}|-|X_m\cap X_{m+1}|- \left|\bigcup\limits_{i=1}^{m-1} \big((X_i\cap X_{m+1}) \setminus (X_m\cap X_{m+1})\big)\right|\\
&\stackrel{(\star)}{=}\sum\limits_{i=1}^{m+1}|X_i|-\sum\limits_{i=1}^{m}|X_i\cap X_{i+1}|-\sum\limits_{i=1}^{m-2}\sum\limits_{j=i+2}^m|(X_i\cap X_j)\setminus(X_i\cap X_{i+1})|- \left|\bigcup\limits_{i=1}^{m-1} \big((X_i\cap X_{m+1}) \setminus (X_m\cap X_{m+1})\big)\right|\\
&\stackrel{(\diamond)}{=}\sum\limits_{i=1}^{m+1}|X_i|-\sum\limits_{i=1}^{m}|X_i\cap X_{i+1}|-\sum\limits_{i=1}^{m-1}\sum\limits_{j=i+2}^{m+1}|(X_i\cap X_j)\setminus(X_i\cap X_{i+1})|,
\end{align*}
where $(\ast)$ follows by the inclusion-exclusion principle, $(\star)$ follows by the induction hypothesis, and $(\diamond)$ follows by the last two statements of Claim \ref{cla2}.
Consequently, the conclusion is also valid for $r_b(\bm{x})=m+1$.
This completes the proof.

\section{Proof of Theorem \ref{thm:maximum}}\label{sec:maximum}

Since \( \mathcal{L}_t^b(\bm{x}) = \bigcup_{\bm{y} \in \mathcal{D}_t^b(\bm{x})} \mathcal{I}_t^b(\bm{y}) \), by Lemmas \ref{del_ball} and \ref{lem:ins_ball}, we can compute
\begin{align*}
  \big| \mathcal{L}_t^b(\bm{x}) \big|
  &\leq \sum_{\bm{y} \in \mathcal{D}_t^b(\bm{x})} \big| \mathcal{I}_t^b(\bm{y}) \big|
  \leq q^{b-1}(n-b+1)\big((n-b+1)(q-1)+1\big).
\end{align*}
When $q\geq 3$, we define
\begin{align*}
  h(\bm{x})\triangleq q^{b-1}\big((n-b+1)(q-1)-1\big) \cdot r_b(\bm{x})+2q^{b-1}-\sum\limits_{j=1}^{b-1} q^{b-j-1}\cdot  f_{b,j}(\bm{x}).
\end{align*}
It is clear that $\left|\mathcal{L}_1^b(\bm{x})\right|= h(\bm{x})- \sum_{i=1}^{n-2b} g_{b,i}(\bm{x})\leq h(\bm{x})$.
We then prove Theorem \ref{thm:maximum} through the following steps:
\begin{itemize}
  \item Demonstrating that \( h(\bm{x}) \) is a strictly increasing function of \( r_b(\bm{x})\) and that \( h(\bm{x}) \) is a constant independent on the choice of $\bm{x}$ when $r_b(\bm{x})$ achieves its maximum value $n-b+1$;
  \item Finding a sequence $\bm{x}$ such that $r_b(\bm{x})=n-b+1$ and $\sum_{i=1}^{n-2b} g_{b,i}(\bm{x})=0$.
\end{itemize}

\begin{claim}\label{cla:h}
  Let $n\geq 2b+1$ with $b\geq 2$, for any $\bm{x}, \bm{x}' \in \Sigma_q^n$ with $r_b(\bm{x})> r_b(\bm{x}')$, it holds that $h(\bm{x})> h(\bm{x}')$.
  Moreover, let
  \begin{align*}
  h\triangleq q^{b-1}(n-b+1)\big((n-b+1)(q-1)-1\big)+2q^{b-1}-\frac{(n-b)(q^{b-1}-1)}{q-1}+ \frac{q^{b}-q-(b-1)(q-1)}{(q-1)^2},
  \end{align*}
  we have $\max_{\bm{x}\in \Sigma_q^n}\{h(\bm{x})\}= h$ and this maximum size can be achieved if and only if $r_b(\bm{x})=n-b+1$.
\end{claim}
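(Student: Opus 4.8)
The plan is to treat $h(\bm{x})$ as a function that depends on $\bm{x}$ only through the integer $r_b(\bm{x})$ together with the quantities $f_{b,j}(\bm{x})$, and to bound the $f_{b,j}$ contribution from above in terms of $r_b(\bm{x})$. First I would rewrite $f_{b,j}(\bm{x}) = |\{i\in[1,n-b-j]: x_i\neq x_{i+b},\ x_{i+j}\neq x_{i+j+b}\}|$ in terms of the indicator vector $\chi = (\chi_1,\dots,\chi_{n-b})$ where $\chi_i = \mathbbm{1}_{x_i\neq x_{i+b}}$; note that $r_b(\bm{x}) = 1 + \sum_{i=1}^{n-b}\chi_i$, so $r_b(\bm{x})$ is determined by the Hamming weight $w \triangleq \sum_i \chi_i$ of $\chi$, and $f_{b,j}(\bm{x}) = \sum_{i=1}^{n-b-j}\chi_i\chi_{i+j}$ counts pairs of $1$'s in $\chi$ at distance exactly $j$. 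The key inequality I need is: among all binary strings $\chi$ of length $n-b$ with a fixed weight $w$, the sum $\sum_{j=1}^{b-1} q^{b-j-1} f_{b,j}(\bm{x})$ is \emph{minimized} by spreading the $1$'s out as much as possible — and in particular it can be made $0$ once $w$ is small enough, but as $w$ increases toward its maximum $n-b$ it is forced to grow. Since $h(\bm{x})$ has a $+q^{b-1}((n-b+1)(q-1)-1)\cdot r_b(\bm{x})$ term (coefficient of order $q^{b-1}\cdot n$ in $r_b$) against a $-\sum_j q^{b-j-1}f_{b,j}$ term where each $f_{b,j}\leq w = r_b(\bm{x})-1$, increasing $r_b$ by one adds at least $q^{b-1}((n-b+1)(q-1)-1)$ and subtracts at most $\sum_{j=1}^{b-1}q^{b-j-1} = \frac{q^{b-1}-1}{q-1}$, which for $n\geq 2b+1$, $b\geq 2$ is a net strictly positive change; this gives strict monotonicity in $r_b(\bm{x})$.

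Next, for the maximum I would specialize to $r_b(\bm{x}) = n-b+1$, i.e. $\chi = 1^{n-b}$, the all-ones vector. Then $f_{b,j}(\bm{x}) = n-b-j$ for every $j\in[1,b-1]$, a quantity that no longer depends on $\bm{x}$ (any such $\bm{x}$ has $\chi$ all ones), so $h(\bm{x})$ collapses to the constant
\[
q^{b-1}(n-b+1)\big((n-b+1)(q-1)-1\big) + 2q^{b-1} - \sum_{j=1}^{b-1} q^{b-j-1}(n-b-j).
\]
It then remains to evaluate $\sum_{j=1}^{b-1} q^{b-j-1}(n-b-j)$ in closed form and check it equals $\frac{(n-b)(q^{b-1}-1)}{q-1} - \frac{q^{b}-q-(b-1)(q-1)}{(q-1)^2}$, which reproduces exactly the claimed value $h$; this is the routine geometric-sum identity $\sum_{j=1}^{b-1} q^{b-1-j} = \frac{q^{b-1}-1}{q-1}$ together with $\sum_{j=1}^{b-1} j\, q^{b-1-j} = \frac{q^{b}-q-(b-1)(q-1)}{(q-1)^2}$, verified by the standard derivative-of-geometric-series trick. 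The ``if and only if'' is immediate: $h(\bm{x}) = h$ forces $r_b(\bm{x}) = n-b+1$ by the strict monotonicity just established, and conversely $r_b(\bm{x}) = n-b+1$ forces $\chi$ all ones hence $h(\bm{x}) = h$.

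The main obstacle is the combinatorial inequality in the first step — proving that $\sum_{j=1}^{b-1} q^{b-j-1} f_{b,j}(\bm{x})$ grows slowly enough (at most $\frac{q^{b-1}-1}{q-1}$ per unit increase of $r_b$) that it cannot defeat the linear-in-$n$ gain from the $r_b(\bm{x})$ term. I would handle this not by an exchange/optimization argument over all $\chi$ of given weight, but more cheaply: for \emph{any} $\bm{x}$ and \emph{any} $j$, $f_{b,j}(\bm{x}) \leq |\{i: x_i\neq x_{i+b}\}| = r_b(\bm{x}) - 1$ trivially (each pair contributing to $f_{b,j}$ has its left endpoint among the $r_b(\bm{x})-1$ positions with $x_i\neq x_{i+b}$), so $\sum_{j=1}^{b-1} q^{b-j-1} f_{b,j}(\bm{x}) \leq (r_b(\bm{x})-1)\sum_{j=1}^{b-1} q^{b-j-1} = (r_b(\bm{x})-1)\frac{q^{b-1}-1}{q-1}$. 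Writing $h(\bm{x})$ as a function $\tilde h(r)$ of $r = r_b(\bm{x})$ plus a nonnegative-or-controlled correction, and comparing $\tilde h(r) $ and $\tilde h(r-1)$, the gain $q^{b-1}((n-b+1)(q-1)-1)$ beats the loss $\frac{q^{b-1}-1}{q-1}$ whenever $n-b+1 \geq 2$, which holds; care is needed only to check the bound $f_{b,j}\leq r_b-1$ is compatible with the precise minus-sign bookkeeping, but since $h(\bm{x})$ uses $-\sum_j q^{b-j-1}f_{b,j}(\bm{x})$ and we are lower-bounding $h(\bm{x}) - h(\bm{x}')$ with $r_b(\bm{x}) = r_b(\bm{x}')+1$, the worst case is $f_{b,j}(\bm{x})$ as large as possible and $f_{b,j}(\bm{x}')$ as small as $0$, exactly the regime the crude bound controls. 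This finishes strict monotonicity, and with the explicit evaluation of the top value, the claim follows.
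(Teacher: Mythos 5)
Your proposal is correct and follows essentially the same route as the paper: bound the correction term $\sum_{j=1}^{b-1}q^{b-j-1}f_{b,j}(\bm{x})$ by a quantity strictly smaller than the coefficient $q^{b-1}\big((n-b+1)(q-1)-1\big)$ of $r_b(\bm{x})$, deduce strict monotonicity in $r_b$, and evaluate at $r_b(\bm{x})=n-b+1$, where $f_{b,j}(\bm{x})=n-b-j$ exactly (the paper uses the bound $f_{b,j}(\bm{x})\leq n-b-j$ throughout, you use $f_{b,j}(\bm{x})\leq r_b(\bm{x})-1$; both work). One bookkeeping nit: the ``loss'' you must beat is the full worst-case gap $(r_b(\bm{x})-1)\frac{q^{b-1}-1}{q-1}\leq (n-b)\frac{q^{b-1}-1}{q-1}$ between the two sequences' correction terms, not the per-unit $\frac{q^{b-1}-1}{q-1}$ you quote mid-argument, but since $(n-b)\frac{q^{b-1}-1}{q-1}<q^{b-1}(n-b)\leq q^{b-1}\big((n-b+1)(q-1)-1\big)$ the inequality you need still holds, as your closing paragraph essentially acknowledges.
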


\begin{IEEEproof}
Recall that $f_{b,j}(\bm{x})\triangleq \big|\{i\in[1,n-b-j]:x_i\neq x_{i+b},x_{i+j}\neq x_{i+j+b}\} \big|\leq n-b-j$ for $j\in[1,b-1]$, we can compute
\begin{equation}\label{eq:f}
\begin{aligned}
  \sum\limits_{j=1}^{b-1} q^{b-j-1} \cdot  f_{b,j}(\bm{x})
  &\stackrel{(\ast)}{\leq} \sum\limits_{j=1}^{k} q^{b-j-1} \cdot (n-b-j)\\
  &= (n-b) \cdot \sum\limits_{j=1}^{b-1} q^{b-j-1}- \sum\limits_{j=1}^{b-1} j\cdot q^{b-j-1}\\
  &=\frac{(n-b)(q^{b-1}-1)}{q-1}- \frac{q^{b}-q-(b-1)(q-1)}{(q-1)^2}\\
  &< q^{b-1}\big((n-b+1)(q-1)-1\big),
\end{aligned}
\end{equation}
where $(\ast)$ holds for equality if and only if $x_i\neq x_{i+b}$ for $i\in [1,n-b]$, or equivalently $r_b(\bm{x})=n-b+1$.

Since $r_b(\bm{x})> r_b(\bm{x}')$, it then follows by definition that
\begin{align*}
  h(\bm{x})
  &> q^{b-1}\big((n-b+1)(q-1)-1\big) \cdot \big(r_b(\bm{x}')+1\big)+2q^{b-1}- q^{b-1}\big((n-b+1)(q-1)-1\big)\\
  &= q^{b-1}\big((n-b+1)(q-1)-1\big) \cdot r_b(\bm{x}')+2q^{b-1}\\
  &\geq h(\bm{x}').
\end{align*}
When $r_b(\bm{x})=n-b+1$, by Equation (\ref{eq:f}), we can compute $h(\bm{x})=h$. Note that $r_b(\bm{x})\leq n-b+1$, the proof is completed.
\end{IEEEproof}

When $q\geq 3$, there exists some sequence $\bm{x}\in \Sigma_q^n$ such that $x_i, x_{i+b}, x_{i+2b}$ are pairwise distinct for $i\in [1,n-2b]$. In this case, it follows by definition that $r_b(\bm{x})=n-b+1$ and $g_{b,i}(\bm{x})=0$ for $i\in [1,n-2b]$.
Since $\left|\mathcal{L}_1^b(\bm{x})\right|= h(\bm{x})- \sum_{i=1}^{n-2b} g_{b,i}(\bm{x})$, by Claim \ref{cla:h}, the conclusion of Theorem \ref{thm:maximum} follows.

\section{Proof of Theorem \ref{thm:average}}\label{sec:average}
To determine the expected value $\mathop{\mathbb{E}}_{\bm{x}\in\Sigma_q^n}\big[\big|\mathcal{L}_1^b(\bm{x})\big|\big]$, we need to compute the expected values $\mathop{\mathbb{E}}_{\bm{x}\in\Sigma_q^n}\big[r_b(\bm{x})\big]$, $\mathop{\mathbb{E}}_{\bm{x}\in\Sigma_q^n}\big[f_{b,j}(\bm{x})\big]$, and $\mathop{\mathbb{E}}_{\bm{x}\in\Sigma_q^n}\big[g_{b,i}(\bm{x})\big]$.

\begin{lemma}\label{lem:r}
Let $n\geq b+1$ with $b\geq 1$, it holds that
\begin{align*}
\mathop{\mathbb{E}}_{\bm{x}\in\Sigma_q^n}[r_b(\bm{x})]=1+\frac{(q-1)(n-b)}{q}.
\end{align*}
\end{lemma}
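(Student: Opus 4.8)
The plan is to compute $\mathbb{E}_{\bm{x}\in\Sigma_q^n}[r_b(\bm{x})]$ directly from the run-counting identity in Lemma~\ref{del_ball}, namely $r_b(\bm{x}) = 1 + |\{i\in[1,n-b]: x_i \neq x_{i+b}\}|$, and then apply linearity of expectation. Rewriting this as a sum of indicators, $r_b(\bm{x}) = 1 + \sum_{i=1}^{n-b} \mathbbm{1}_{x_i \neq x_{i+b}}$, we obtain
\[
\mathbb{E}_{\bm{x}\in\Sigma_q^n}[r_b(\bm{x})] = 1 + \sum_{i=1}^{n-b} \mathbb{P}_{\bm{x}\in\Sigma_q^n}(x_i \neq x_{i+b}).
\]

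The key step is then to observe that under the uniform distribution on $\Sigma_q^n$, the entries $x_1,\dots,x_n$ are i.i.d.\ uniform on $\Sigma_q$; in particular, for each fixed $i \in [1,n-b]$, the pair $(x_i, x_{i+b})$ is a pair of independent uniform symbols (here I use $b \geq 1$ so that $i \neq i+b$). Hence $\mathbb{P}(x_i = x_{i+b}) = 1/q$ and $\mathbb{P}(x_i \neq x_{i+b}) = (q-1)/q$. Summing the $n-b$ identical terms gives
\[
\mathbb{E}_{\bm{x}\in\Sigma_q^n}[r_b(\bm{x})] = 1 + (n-b)\cdot\frac{q-1}{q} = 1 + \frac{(q-1)(n-b)}{q},
\]
which is exactly the claimed formula.

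There is essentially no obstacle here; the proof is a short application of linearity of expectation to the indicator representation of $r_b$. The only point requiring a moment's care is the independence claim for the pair $(x_i, x_{i+b})$, which holds precisely because $b \geq 1$ forces the two indices to be distinct (the condition $n \geq b+1$ merely ensures the index set $[1,n-b]$ is nonempty, so the statement is non-vacuous). This lemma then serves as one of the three expectation computations — together with $\mathbb{E}[f_{b,j}(\bm{x})]$ and $\mathbb{E}[g_{b,i}(\bm{x})]$ — needed to assemble $\mathbb{E}_{\bm{x}\in\Sigma_q^n}[|\mathcal{L}_1^b(\bm{x})|]$ in Theorem~\ref{thm:average} via the explicit formula of Theorem~\ref{thm:explicit}.
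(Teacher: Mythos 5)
Your proof is correct and follows essentially the same route as the paper: both start from the identity $r_b(\bm{x}) = 1 + \sum_{i=1}^{n-b}\mathbbm{1}_{x_i \neq x_{i+b}}$ of Lemma~\ref{del_ball} and apply linearity of expectation, the only cosmetic difference being that the paper counts $|S_i| = q^{n-1}(q-1)$ while you phrase the same computation as $\mathbb{P}(x_i \neq x_{i+b}) = (q-1)/q$.
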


\begin{IEEEproof}
Recall by Lemma \ref{del_ball} that $r_b(\bm{x})= 1+ |\{i\in [1,n-b]: x_i\neq x_{i+b}\}| \leq n-b+1$.
For $i\in [1,n-b]$, we define $S_i\triangleq \{\bm{x}\in \Sigma_q^n: x_i\neq x_{i+b}\}$.
It is clear that $|S_i|=q^{n-1}(q-1)$.
Now, we can compute
\begin{align*}
    \mathop{\mathbb{E}}_{\bm{x}\in\Sigma_q^n}[r_b(\bm{x})]
    &=\frac{1}{q^n} \sum_{\bm{x}\in \Sigma_q^n} r_b(\bm{x})\\
    &=\frac{1}{q^n} \left(q^n+ \sum_{\bm{x}\in \Sigma_q^n}\sum_{i=1}^{n-b} \mathbbm{1}_{x_i\neq x_{i+b}} \right)\\
    &=1+\frac{1}{q^n} \sum_{i=1}^{n-b} \sum_{\bm{x}\in \Sigma_q^n}\mathbbm{1}_{x_i\neq x_{i+b}} \\
    &=1+\frac{1}{q^n} \sum_{i=1}^{n-b} |S_i|\\
    &= 1+\frac{(q-1)(n-b)}{q}.
\end{align*}
This completes the proof.
\end{IEEEproof}

\begin{lemma}\label{lem:f}
Let $n\geq 2b+1$ with $b\geq 1$, for $j\in [1,b-1]$, it holds that
\begin{align*}
\mathop{\mathbb{E}}_{\bm{x}\in\Sigma_q^n}[f_{b,j}(\bm{x})]=\frac{(q-1)^2(n-b-j)}{q^2}.
\end{align*}
\end{lemma}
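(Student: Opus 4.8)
The plan is to compute $\mathop{\mathbb{E}}_{\bm{x}\in\Sigma_q^n}[f_{b,j}(\bm{x})]$ by linearity of expectation, exactly as in the proof of Lemma~\ref{lem:r}. Recall the definition $f_{b,j}(\bm{x})=\big|\{i\in[1,n-b-j]:x_i\neq x_{i+b},\,x_{i+j}\neq x_{i+j+b}\}\big|$. So $f_{b,j}(\bm{x})=\sum_{i=1}^{n-b-j}\mathbbm{1}_{x_i\neq x_{i+b}}\cdot\mathbbm{1}_{x_{i+j}\neq x_{i+j+b}}$, and hence
\begin{align*}
\mathop{\mathbb{E}}_{\bm{x}\in\Sigma_q^n}[f_{b,j}(\bm{x})]=\frac{1}{q^n}\sum_{i=1}^{n-b-j}\big|\{\bm{x}\in\Sigma_q^n:x_i\neq x_{i+b},\,x_{i+j}\neq x_{i+j+b}\}\big|.
\end{align*}
It therefore suffices to count, for each fixed $i\in[1,n-b-j]$, the number of sequences $\bm{x}$ satisfying the two inequality constraints $x_i\neq x_{i+b}$ and $x_{i+j}\neq x_{i+j+b}$.

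The key observation is that because $1\leq j\leq b-1$, the four indices $i,\,i+b,\,i+j,\,i+j+b$ involved in the two constraints are pairwise distinct: $i<i+j<i+b<i+j+b$ (using $j<b$ for $i+j<i+b$, and $j\geq 1$ for the strict inequalities). Consequently the two constraints involve \emph{disjoint} pairs of coordinates, so they behave independently. The first constraint $x_i\neq x_{i+b}$ is satisfied by a fraction $\tfrac{q-1}{q}$ of all choices of $(x_i,x_{i+b})$, and likewise the second constraint $x_{i+j}\neq x_{i+j+b}$ is satisfied by a fraction $\tfrac{q-1}{q}$ of choices of $(x_{i+j},x_{i+j+b})$; the remaining $n-4$ coordinates are free. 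Hence the number of valid sequences equals $\big((q-1)q\big)^2\cdot q^{n-4}=(q-1)^2 q^{n-2}$, independently of $i$. Summing over the $n-b-j$ values of $i$ and dividing by $q^n$ gives
\begin{align*}
\mathop{\mathbb{E}}_{\bm{x}\in\Sigma_q^n}[f_{b,j}(\bm{x})]=\frac{(n-b-j)(q-1)^2 q^{n-2}}{q^n}=\frac{(q-1)^2(n-b-j)}{q^2},
\end{align*}
as claimed.

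There is essentially no obstacle here; the only point requiring a line of justification is the disjointness of the index pairs, which is exactly where the hypothesis $j\in[1,b-1]$ is used and which is what makes the events independent. (If instead $j=b$, the index $i+j$ would coincide with $i+b$ and the two events would overlap, changing the count — so the restriction to $j\leq b-1$ is genuinely needed, consistent with how $f_{b,j}$ appears in Theorem~\ref{thm:explicit}.) The condition $n\geq 2b+1$ guarantees that the summation range $[1,n-b-j]$ is nonempty and that all four indices lie in $[1,n]$, so the statement is meaningful.
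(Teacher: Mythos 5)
Your proof is correct and is essentially identical to the paper's: the paper likewise writes $f_{b,j}(\bm{x})$ as a sum of indicators, defines $S_{i,j}=\{\bm{x}\in\Sigma_q^n: x_i\neq x_{i+b},\,x_{i+j}\neq x_{i+j+b}\}$ with $|S_{i,j}|=q^{n-2}(q-1)^2$, and sums over $i$. One small correction to your parenthetical aside: if $j=b$ the two constraints would form a chain $x_i\neq x_{i+b}\neq x_{i+2b}$, and the count $q\cdot(q-1)\cdot(q-1)\cdot q^{n-3}=(q-1)^2q^{n-2}$ would in fact be unchanged, so the restriction $j\leq b-1$ comes from the role of $f_{b,j}$ in Theorem~\ref{thm:explicit}, not from any failure of this expectation computation.
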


\begin{IEEEproof}
Recall that $f_{b,j}(\bm{x})\triangleq \big|\{i\in[1,n-b-j]:x_i\neq x_{i+b},x_{i+j}\neq x_{i+j+b}\} \big|$.
For $i\in [1,n-b-j]$, we define $S_{i,j}= \{\bm{x}\in \Sigma_q^n: x_i\neq x_{i+b},x_{i+j}\neq x_{i+j+b}\}$.
It is clear that $|S_{i,j}|= q^{n-2}(q-1)^2$.
Now, we can compute
\begin{align*}
\mathop{\mathbb{E}}_{\bm{x}\in\Sigma_q^n}[f_{b,j}(\bm{x})]
&= \frac{1}{q^n}\sum_{\bm{x}\in \Sigma_q^n} f_{b,j}(\bm{x})\\
&= \frac{1}{q^n} \sum_{\bm{x}\in \Sigma_q^n}\sum_{i=1}^{n-b-j} \mathbbm{1}_{x_i\neq x_{i+b}, x_{i+j}\neq x_{i+j+b}}\\
&= \frac{1}{q^n} \sum_{i=1}^{n-b-j} \sum_{\bm{x}\in \Sigma_q^n} \mathbbm{1}_{x_i\neq x_{i+b}, x_{i+j}\neq x_{i+j+b}}\\
&= \frac{1}{q^n}\sum_{i=1}^{n-b-j} |S_{i,j}|\\
&= \frac{(q-1)^2(n-b-j)}{q^2}.
\end{align*}
This completes the proof.
\end{IEEEproof}

\begin{lemma}\label{lem:g}
Let $n\geq 2b+1$ with $b\geq 1$, for $i\in [1,n-2b]$, it holds that
\begin{flalign*}
\mathop{\mathbb{E}}_{\bm{x}\in\Sigma_q^n}[g_{b,i}(\bm{x})]
&=\sum\limits_{i=1}^{n-2b}\sum\limits_{j=i+b}^{n-b}k_{i,j}q^{-(j+b-i+1)},
\end{flalign*}
where $k_{i,j}=(q-1)q^{2b-1}$ when $b\mid(j-i)$, and $k_{i,j}=(q-1)^2q^{2b-2}$ when $b\nmid(j-i)$.
\end{lemma}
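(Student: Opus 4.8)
The plan is to compute $\mathop{\mathbb{E}}_{\bm{x}\in\Sigma_q^n}[g_{b,i}(\bm{x})]$ by the same indicator-and-linearity-of-expectation strategy used for Lemmas \ref{lem:r} and \ref{lem:f}. Recall that
\[
g_{b,i}(\bm{x})\triangleq \big| \{j\in [i+b,n-b]:x_i\ne x_{i+b},\ x_{j}\ne x_{j+b},\ \bm{x}_{[i,j+b]}\text{ is $2b$-periodic}\} \big|,
\]
so by linearity $\mathop{\mathbb{E}}_{\bm{x}\in\Sigma_q^n}[g_{b,i}(\bm{x})]=\sum_{j=i+b}^{n-b}\mathbbm{P}\big(x_i\ne x_{i+b},\ x_j\ne x_{j+b},\ \bm{x}_{[i,j+b]}\text{ is $2b$-periodic}\big)$, with $\bm{x}$ uniform over $\Sigma_q^n$. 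The task reduces to counting, for each fixed pair $(i,j)$, the number $N_{i,j}$ of $\bm{x}\in\Sigma_q^n$ for which $\bm{x}_{[i,j+b]}$ is $2b$-periodic and both endpoint conditions $x_i\ne x_{i+b}$, $x_j\ne x_{j+b}$ hold; then the probability is $N_{i,j}/q^n$. The positions outside $[i,j+b]$ are free, contributing a factor $q^{n-(j+b-i+1)}$, so it suffices to count the valid patterns on the window $[i,j+b]$ of length $L\triangleq j+b-i+1$.

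Next I would analyze the window count. A $2b$-periodic string of length $L\ge 2b$ is completely determined by its first $2b$ symbols, call them $a_0,\dots,a_{2b-1}$, where position $i+t$ carries symbol $a_{t\bmod 2b}$. Under $2b$-periodicity, the condition $x_i\ne x_{i+b}$ becomes $a_0\ne a_b$, and $x_j\ne x_{j+b}$ becomes $a_{(j-i)\bmod 2b}\ne a_{(j-i+b)\bmod 2b}$. Writing $m\triangleq (j-i)\bmod 2b$, the two constraints are $a_0\ne a_b$ and $a_m\ne a_{m+b\bmod 2b}$. If $b\mid (j-i)$ then $m\in\{0,b\}$, and in either case the pair $\{a_m,a_{m+b}\}$ equals $\{a_0,a_b\}$, so the second constraint is identical to the first; hence exactly one inequality constraint on the $2b$ free symbols, giving $q^{2b-1}(q-1)$ patterns, i.e. $k_{i,j}=(q-1)q^{2b-1}$. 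If $b\nmid(j-i)$ then $m\notin\{0,b\}$, so $\{0,b\}$ and $\{m,m+b\bmod 2b\}$ are two disjoint pairs of indices among $0,\dots,2b-1$; the two inequalities are on disjoint coordinate pairs and are therefore independent, giving $q^{2b-2}(q-1)^2$ patterns, i.e. $k_{i,j}=(q-1)^2q^{2b-2}$. Combining, $N_{i,j}=k_{i,j}\,q^{\,n-(j+b-i+1)}$, so $\mathbbm{P}(\cdots)=k_{i,j}q^{-(j+b-i+1)}$, and summing over $i\in[1,n-2b]$ and $j\in[i+b,n-b]$ yields the claimed formula.

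The one genuinely delicate point — and the step I'd treat most carefully — is the case analysis showing that $\{0,b\}$ and $\{m,m+b\bmod 2b\}$ are disjoint exactly when $b\nmid(j-i)$, and that in that case the disjointness really does make the two inequalities statistically independent under the uniform distribution on the $2b$ coordinates. The subtlety is that the complement $m+b\bmod 2b$ could in principle coincide with $0$ or $b$, or the two pairs could overlap in just one element; one must check that $m+b\equiv 0\pmod{2b}$ forces $m=b$ and $m+b\equiv b\pmod{2b}$ forces $m=0$, so for $m\notin\{0,b\}$ the four indices $0,b,m,m+b\bmod 2b$ are genuinely four distinct residues. I should also note the edge case $L=2b$ (i.e.\ $j=i+b$), where $m=b$, which falls under the $b\mid(j-i)$ branch and is consistent with the general argument. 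Once the disjointness bookkeeping is in place, the rest is the routine factorization of counts and substitution into linearity of expectation.
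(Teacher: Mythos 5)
Your proposal is correct and follows essentially the same route as the paper: linearity of expectation plus a count of the sequences satisfying the window conditions, with the factor $q^{n-(j+b-i+1)}$ from the free positions and the case split on $b\mid(j-i)$ versus $b\nmid(j-i)$. Your reduction to the first $2b$ symbols $a_0,\dots,a_{2b-1}$ and the explicit check that the index pairs $\{0,b\}$ and $\{m,(m+b)\bmod 2b\}$ are disjoint when $b\nmid(j-i)$ actually supplies the independence justification that the paper only asserts; the only blemish is the harmless slip that $j=i+b$ gives window length $2b+1$, not $2b$.
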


\begin{IEEEproof}
Recall that $g_{b,i}(\bm{x})\triangleq \big| \{j\in [i+b,n-b]:x_i\ne x_{i+b},x_{j}\ne x_{j+b},\bm{x}_{[i,j+b]}\text{ is $2b$-periodic}\} \big|$.
For $j\in [i+b,n-b]$, we define $S_{i,j}'= \{\bm{x}\in \Sigma_q^n: x_i\ne x_{i+b},x_{j}\ne x_{j+b},\bm{x}_{[i,j+b]}\text{ is $2b$-periodic}\}$.
\begin{itemize}
\item When $b\mid(j-i)$, since $\bm{x}_{[i,j+b]}$ is $2b$-periodic, we can conclude that $x_i\neq x_{i+b}$ if and only if $x_{j}\neq x_{j+b}$.
    In this case, we can write $S_{i,j}'= \{\bm{x}\in \Sigma_q^n: x_i\ne x_{i+b},\bm{x}_{[i,j+b]}\text{ is $2b$-periodic}\}$ and can compute $|S_{i,j}'|= q^{2b-1}(q-1) \cdot q^{n-(j+b-i+1)}$.
\item When $b\nmid(j-i)$, we can conclude that the conditions $x_i\neq x_{i+b}$ and $x_{j}\neq x_{j+b}$ are independent. In this case, we can compute $|S_{i,j}'|= q^{2b-2}(q-1)^2 \cdot q^{n-(j+b-i+1)}$.
\end{itemize}
Now, we get
\begin{align*}
\mathop{\mathbb{E}}_{\bm{x}\in\Sigma_q^n}[g_{b,i}(\bm{x})]
&=\frac{1}{q^n}\sum\limits_{\bm{x}\in\Sigma_q^n}g_{b,i}(\bm{x})\\
&= \frac{1}{q^n} \sum_{\bm{x}\in \Sigma_q^n}\sum_{j=i+b}^{n-b} \mathbbm{1}_{x_i\neq x_{i+b}, x_{j}\neq x_{j+b}, \bm{x}_{[i,j+b]} \text{ is $2b$-periodic}}\\
&= \frac{1}{q^n} \sum_{j=i+b}^{n-b} \sum_{\bm{x}\in \Sigma_q^n} \mathbbm{1}_{x_i\neq x_{i+b}, x_{j}\neq x_{j+b}, \bm{x}_{[i,j+b]} \text{ is $2b$-periodic}}\\
&=\frac{1}{q^n}\sum\limits_{j=i+b}^{n-b}|S_{i,j}'|\\
&=\sum\limits_{j=i+b}^{n-b} \frac{k_{i,j}}{q^{j+b-i+1}}.
\end{align*}
This completes the proof.
\end{IEEEproof}

Since
\begin{align*}
\mathop{\mathbb{E}}_{\bm{x}\in\Sigma_q^n} \big[\big|\mathcal{L}_1^b(\bm{x})\big|\big]
&=q^{b-1}\big((n-b+1)(q-1)-1\big) \cdot \mathop{\mathbb{E}}_{\bm{x}\in\Sigma_q^n}\big[r_b(\bm{x})\big]+2q^{b-1} \\
&\quad\quad -\sum\limits_{j=1}^{b-1} q^{b
-j-1}\cdot  \mathop{\mathbb{E}}_{\bm{x}\in\Sigma_q^n}\big[f_{b,j}(\bm{x})\big] -\sum\limits_{i=1}^{n-2b} \mathop{\mathbb{E}}_{\bm{x}\in\Sigma_q^n} \big[g_{b,i}(\bm{x})\big],
\end{align*}
the conclusion of Theorem \ref{thm:average} follows by Lemmas \ref{lem:r}, \ref{lem:f}, and \ref{lem:g}.

\section{Proof of Theorem \ref{thm:concentration}}\label{sec:concentration}

Our strategy is to construct a code $\mathcal{C} \subseteq \Sigma_q^n$ containing $(1-\frac{3}{n})q^n$ sequences, such that the size of each Levenshtein ball is close to the average size $\mathop{\mathbb{E}}_{\bm{x} \in \Sigma_q^n} \left[ \left| \mathcal{L}_1^b(\bm{x}) \right| \right]$.

\begin{construction}\label{constr}
Let $\mathcal{C}$ be a collection of sequences in $\Sigma_q^n$ such that each $\bm{x} \in \mathcal{C}$ satisfies the following conditions:
  \begin{itemize}
    \item The number of $b$-runs in $\bm{x}$ satisfies
    \begin{align*}
      \left(\frac{q-1}{q}-\sqrt{\frac{\ln n}{2(n-b)}} \right)(n-b) + 2 \leq r_b(\bm{x}) \leq \left(\frac{q-1}{q}+\sqrt{\frac{\ln n}{2(n-b)}} \right)(n-b).
    \end{align*}
    \item The length of each $2b$-periodic substring in $\bm{x}$ is less than $2\log_q n+2b$.
  \end{itemize}
\end{construction}

\begin{lemma}\label{lem:size}
  Let $\mathcal{C}$ be defined in Construction \ref{constr}, it holds that $|\mathcal{C}| \geq (1-\frac{3}{n})q^n$.
\end{lemma}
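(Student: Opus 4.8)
The plan is to bound the number of sequences $\bm{x}\in\Sigma_q^n$ that violate \emph{each} of the two conditions in Construction \ref{constr} by $\frac{1}{n}q^n$ and $\frac{2}{n}q^n$ respectively (or some split summing to $\frac{3}{n}q^n$), and then take a union bound. Write $\mathcal{C}_1$ for the set of sequences satisfying the first condition and $\mathcal{C}_2$ for the set satisfying the second; then $|\mathcal{C}|\geq q^n - |\overline{\mathcal{C}_1}| - |\overline{\mathcal{C}_2}|$, so it suffices to prove $|\overline{\mathcal{C}_1}| + |\overline{\mathcal{C}_2}| \leq \frac{3}{n}q^n$.

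\medskip

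\textbf{Step 1: the $b$-run count concentrates.} Recall from Lemma \ref{del_ball} that $r_b(\bm{x}) = 1 + \sum_{i=1}^{n-b}\mathbbm{1}_{x_i\neq x_{i+b}}$. Group the indices $i\in[1,n-b]$ into $b$ arithmetic progressions according to $i \bmod b$; along each progression the indicator variables $\mathbbm{1}_{x_i\neq x_{i+b}}$ are independent Bernoulli$(\tfrac{q-1}{q})$ random variables (when $\bm{x}$ is uniform), since they depend on disjoint blocks of coordinates $\{x_i,x_{i+b}\}$ chained together — more cleanly, for a uniform random $\bm{x}$ the sum $\sum_{i=1}^{n-b}\mathbbm{1}_{x_i\neq x_{i+b}}$ has the property that each summand, conditioned on the previous ones along its residue class, is Bernoulli$(\tfrac{q-1}{q})$, so Hoeffding/Azuma applies to the whole sum with $n-b$ bounded differences. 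This gives
\[
\mathbbm{P}\!\left(\Big|r_b(\bm{x}) - 1 - \tfrac{(q-1)(n-b)}{q}\Big| > \sqrt{\tfrac{(n-b)\ln n}{2}}\right) \leq 2e^{-\ln n} = \tfrac{2}{n},
\]
which (after absorbing the ``$+1$ versus $+2$'' discrepancy into the interval, harmless for large $n$) shows $|\overline{\mathcal{C}_1}| \leq \tfrac{2}{n}q^n$. I would double-check the constant in Construction \ref{constr}: the interval there is exactly the Hoeffding interval of width $\sqrt{\tfrac{(n-b)\ln n}{2}}$ around $\tfrac{(q-1)(n-b)}{q}$, so the bound $\tfrac{2}{n}$ is the natural output; to hit $\tfrac{3}{n}$ overall one then needs $|\overline{\mathcal{C}_2}|\leq\tfrac{1}{n}q^n$.

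\medskip

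\textbf{Step 2: long $2b$-periodic substrings are rare.} For a fixed starting position $p$ and fixed length $\ell = \lceil 2\log_q n + 2b\rceil$, the probability that $\bm{x}_{[p,p+\ell-1]}$ is $2b$-periodic is $q^{2b}/q^{\ell} \leq q^{2b}\cdot q^{-2\log_q n - 2b} = n^{-2}$, since a $2b$-periodic string of length $\ell$ is determined by its first $2b$ symbols (when $\ell \geq 2b$). Union-bounding over the at most $n$ choices of $p$ gives that the probability some $2b$-periodic substring of length $\geq \ell$ exists is at most $n\cdot n^{-2} = \tfrac1n$, hence $|\overline{\mathcal{C}_2}| \leq \tfrac1n q^n$. (One should note a length-$\ell$ $2b$-periodic substring exists iff \emph{some} window of length exactly $\ell$ is $2b$-periodic, so restricting to windows of this one length is legitimate.) Adding the two bounds yields $|\mathcal{C}| \geq (1 - \tfrac3n)q^n$.

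\medskip

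\textbf{Main obstacle.} The only genuinely delicate point is justifying the concentration inequality in Step 1 rigorously: the events $\{x_i\neq x_{i+b}\}$ for $i\in[1,n-b]$ are \emph{not} mutually independent (consecutive ones in the same residue class share a coordinate), so one must either invoke Azuma's inequality on the Doob martingale of $r_b$ with respect to the coordinate filtration (bounded differences $\leq 1$ each, $n$ steps, giving the $\sqrt{n\ln n / 2}$ scale — this is exactly the route He and Ye took and the one the paper alludes to), or partition into $b$ independent residue-class subsums and apply Hoeffding to each. Either way the bookkeeping of constants must line up with the explicit interval in Construction \ref{constr}; everything else is a routine first-moment/union-bound estimate.
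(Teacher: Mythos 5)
Your proposal is correct and follows essentially the same route as the paper: a union bound over the two defining conditions, Hoeffding's inequality for the $b$-run count (giving $p_1\leq \frac{2}{n}$), and a first-moment/union bound over length-$(2\log_q n+2b)$ windows for the periodic-substring condition (giving $p_2\leq\frac{1}{n}$). The only divergence is in justifying Step 1: the paper cites the exact count $\big|\{\bm{x}:r_b(\bm{x})=i+1\}\big|=q^b(q-1)^i\binom{n-b}{i}$ from \cite{Lan-25} to conclude that $r_b(\bm{x})-1$ is \emph{exactly} binomial $B\big(n-b,\frac{q-1}{q}\big)$ -- in particular the indicators $\mathbbm{1}_{x_i\neq x_{i+b}}$ are in fact mutually independent (along each residue class the successive differences $x_{i+b}-x_i \bmod q$ are i.i.d.\ uniform), so the obstacle you flag does not actually arise and plain Hoeffding applies directly, without Azuma or per-class splitting.
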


\begin{proof}
Choose a sequence $\boldsymbol{x}\in \Sigma_q^{n}$ uniformly at random, let $p_1$ be the probability that $r_b(\bm{x})$ does not belong to the interval $\left[\left(\frac{q-1}{q}-\sqrt{\frac{\ln n}{2(n-b)}}\right)+2, \left(\frac{q-1}{q}+\sqrt{\frac{\ln n}{2(n-b)}}\right)\right]$ and let $p_2$ be the probability that $\bm{x}$ contains a $2b$-periodic substring of length $2\log_q n+2b$, then we can compute $|\mathcal{C}|\geq (1-p_1-p_2)q^n$.
It remains to show that $p_1+p_2\leq \frac{3}{n}$.

We first provide an upper bound on the value of $p_1$.
Let $X_{n,p}$ be a random variable that obeys binomial distribution $B(n,p)$, i.e., $\mathbbm{P}(X_{n,b}=m)= \binom{n}{m}p^m(1-p)^{n-m}$ for $m\in [0,n]$.
Since by \cite[Lemma IV.4]{Lan-25} that $\big|\{\boldsymbol{x}\in \Sigma_q^n: r_b(\bm{x})=i+1 \} \big|= q^b(q-1)^{i}\binom{n-b}{i}= q^n \binom{n-b}{i}\big(\frac{q-1}{q}\big)^i \big(1-\frac{q-1}{q}\big)^{n-b-i}$ for $i\in [0,n-b]$, we can conclude that $r_b(\bm{x})-1$ (when regarding as a random variable) obeys binomial distribution $B\big(n-b,\frac{q-1}{q}\big)$.
Now, by using Hoeffding's inequality, we can compute
\begin{align*}
  p_1
  &= \mathbbm{P}\left(\left| r_b(\bm{x})-1- \frac{(q-1)(n-b)}{q} \right|\geq \sqrt{\frac{\ln n}{2(n-b)}}(n-b) \right) \\
  &= \mathbbm{P}\left(\left|X_{n-b,\frac{q-1}{q}}- \frac{(q-1)(n-b)}{q} \right|\geq \sqrt{\frac{\ln n}{2(n-b)}}(n-b) \right) \\
  &\leq \frac{2}{n}.
\end{align*}

Regarding $p_2$, for any $i\in [1,n-2\log_q n-2b+1]$, the probability that $\bm{x}_{[i,i+2\log_q n+2b-1]}$ is $2b$-periodic can be calculated as
\begin{equation*}
    p_2':= \frac{q^{2b}}{q^{2\log_q n+2b}}= \frac{1}{n^2}.
\end{equation*}
Then by the union bound, we can compute $p_2\leq n\cdot p_2' \leq \frac{1}{n}$.
Now, we derive $p_1+p_2=\frac{3}{n}$, which completes the proof.
\end{proof}

\begin{lemma}\label{lem:exp}
  Let $\mathcal{C}$ be defined in Construction \ref{constr}, there exists some constant $C$ such that for any $\bm{x}\in \mathcal{C}$, it holds that
  \begin{align*}
    \Big| \big|\mathcal{L}_1^b(\bm{x})\big|- \mathop{\mathbb{E}}_{\bm{x} \in \Sigma_q^n}\big[\big|\mathcal{L}_1^b(\bm{x})\big|\big] \Big| \leq C\sqrt{n^3\log_q n}.
  \end{align*}
\end{lemma}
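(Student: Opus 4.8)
The plan is to subtract the explicit formula of Theorem~\ref{thm:explicit} from the average-size formula of Theorem~\ref{thm:average} and to bound the three resulting difference terms separately, exploiting the two defining constraints of Construction~\ref{constr}. Writing $\mathbb{E}$ as shorthand for $\mathop{\mathbb{E}}_{\bm{x}\in\Sigma_q^n}$ and noting that the constant $2q^{b-1}$ cancels, Theorems~\ref{thm:explicit} and~\ref{thm:average} give
\begin{align*}
\big|\mathcal{L}_1^b(\bm{x})\big| - \mathbb{E}\big[\big|\mathcal{L}_1^b(\bm{x})\big|\big]
&= q^{b-1}\big((n-b+1)(q-1)-1\big)\big(r_b(\bm{x}) - \mathbb{E}[r_b(\bm{x})]\big) \\
&\quad - \sum_{j=1}^{b-1} q^{b-j-1}\big(f_{b,j}(\bm{x}) - \mathbb{E}[f_{b,j}(\bm{x})]\big) - \sum_{i=1}^{n-2b}\big(g_{b,i}(\bm{x}) - \mathbb{E}[g_{b,i}(\bm{x})]\big).
\end{align*}
It therefore suffices to show that each of the three groups on the right is $O\big(\sqrt{n^3\log_q n}\big)$ for every $\bm{x}\in\mathcal{C}$.

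For the run-count group, Construction~\ref{constr} forces $r_b(\bm{x})$ into an interval of radius at most $\sqrt{(n-b)\ln n/2}$ centred at $1+(q-1)(n-b)/q$, which is exactly $\mathbb{E}[r_b(\bm{x})]$ by Lemma~\ref{lem:r}; hence $\big|r_b(\bm{x}) - \mathbb{E}[r_b(\bm{x})]\big| = O\big(\sqrt{n\log_q n}\big)$. Multiplying by the coefficient $q^{b-1}\big((n-b+1)(q-1)-1\big) = O(n)$ shows that this group is $O\big(\sqrt{n^3\log_q n}\big)$, and it is the dominant contribution.

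For the $f_{b,j}$-group, $b$ is a fixed constant, $0 \le f_{b,j}(\bm{x}) \le n-b-j$, and $\mathbb{E}[f_{b,j}(\bm{x})] = (q-1)^2(n-b-j)/q^2$ by Lemma~\ref{lem:f}, so each of the constantly many summands is $O(n)$ and the whole group is $O(n) = o\big(\sqrt{n^3\log_q n}\big)$. For the $g_{b,i}$-group, the second constraint of Construction~\ref{constr} is precisely what is needed: if $\bm{x}_{[i,j+b]}$ is $2b$-periodic then $j+b-i+1 < 2\log_q n + 2b$, so for each fixed $i$ there are at most $2\log_q n$ admissible values of $j$; hence $g_{b,i}(\bm{x}) \le 2\log_q n$ for all $i$ and $\sum_{i=1}^{n-2b} g_{b,i}(\bm{x}) = O(n\log_q n)$. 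On the other hand, $\sum_{i=1}^{n-2b}\mathbb{E}[g_{b,i}(\bm{x})] = O(n)$ by Lemma~\ref{lem:g}, because $k_{i,j} = O(1)$ and, for each $i$, the weights $q^{-(j+b-i+1)}$ form a geometric series of ratio $1/q$ summing to $O(1)$. Thus this group is $O(n\log_q n) = o\big(\sqrt{n^3\log_q n}\big)$. Summing the three estimates and choosing $C$ large enough proves Lemma~\ref{lem:exp}; combined with Lemma~\ref{lem:size} and the triangle inequality it yields Theorem~\ref{thm:concentration}.

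The main obstacle is conceptual rather than computational: one must recognize that the two constraints built into Construction~\ref{constr} are tailored to the only two fluctuating quantities in the closed-form size formula of Theorem~\ref{thm:explicit}, namely the $b$-run count $r_b(\bm{x})$ and the aggregate $\sum_i g_{b,i}(\bm{x})$, and one must verify the deterministic bound $g_{b,i}(\bm{x}) = O(\log_q n)$ forced by forbidding long $2b$-periodic substrings. Once these two observations are in place the remainder is routine order-of-magnitude bookkeeping, and in particular it confirms that the $\sqrt{n^3\log_q n}$ rate is driven entirely by the run-count term, the $f$- and $g$-terms contributing only lower-order fluctuations.
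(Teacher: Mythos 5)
Your proposal is correct and follows essentially the same route as the paper: both start from the explicit formula of Theorem~\ref{thm:explicit}, use the run-count window of Construction~\ref{constr} to make the $r_b$-term fluctuate by only $O(\sqrt{n^3\log_q n})$, bound the $f_{b,j}$-contribution by $O(n)$, and use the forbidden long $2b$-periodic substrings to get $g_{b,i}(\bm{x})\leq 2\log_q n$ and hence $\sum_i g_{b,i}(\bm{x})=O(n\log_q n)$. The only cosmetic difference is that you compare each term directly to its expectation from Lemmas~\ref{lem:r}--\ref{lem:g}, whereas the paper compares the whole expression to the leading term $q^{b-2}(q-1)^2n^2$ and then invokes the asymptotic of Theorem~\ref{thm:average}; the substance is identical.
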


\begin{proof}
Since the length of each $2b$-periodic substring in $\bm{x} \in \mathcal{C}$ is less than $2\log_q n+2b$, it follows by the definition that $g_{b,i}(\bm{x})\leq 2\log_q n$ for $i\in [1,n-2b]$.
This implies that $\sum\limits_{i=1}^{n-2b} g_{b,i}(\bm{x})\leq n\cdot 2\log_q n$.
Moreover, by Equation (\ref{eq:f}), we get $\sum\limits_{j=1}^{b-1} f_{b,j}(\bm{x}) q^{b-j-1}< q^{b-1} \big((n-b+1)(q-1)-1\big)$.
Since each $\bm{x} \in \mathcal{C}$ satisfies $r_b(\bm{x})= \frac{(q-1)n}{q}+O(\sqrt{n\log_q n})$, we can compute
\begin{align*}
q^{b-2}(q-1)^2n^2-c_1 \sqrt{n^3\log_q n} \leq \big|\mathcal{L}_1^b(\bm{x})\big|\leq q^{b-2}(q-1)^2n^2+c_2\sqrt{n^3\log_q n},
\end{align*}
for some constant $c_1$ and $c_2$.
Recall by Theorem \ref{thm:average} that $\mathop{\mathbb{E}}_{\bm{x} \in \Sigma_q^n}\big[\big|\mathcal{L}_1^b(\bm{x})\big|\big]= q^{b-2}(q-1)^2n^2+O(n)$, then the conclusion follows.
\end{proof}

The conclusion of Theorem \ref{thm:concentration} follows by Lemmas \ref{lem:size} and \ref{lem:exp} directly.


\end{document}